\documentclass{article}

\usepackage{arxiv}

\usepackage[utf8]{inputenc} 
\usepackage[T1]{fontenc}    
\usepackage{hyperref}       
\usepackage{url}            
\usepackage{booktabs}       
\usepackage{amsfonts}       
\usepackage{nicefrac}       
\usepackage{microtype}      
\usepackage{lipsum,enumerate,bm,mathrsfs}
\usepackage{graphicx,amsmath,multicol,multirow}
\usepackage{url}
\usepackage{amsthm}
\usepackage[style=apa, backend=biber, natbib=true]{biblatex}
\usepackage{adjustbox,xr,comment} 

\usepackage{amsmath}
\usepackage{amssymb}
\usepackage{amsthm,tikz}
\usepackage{colortbl,caption,xr}
\usepackage{adjustbox}  
\usepackage{graphicx}
\usepackage{lscape}
\usepackage{rotating}
\usepackage{graphicx}
\usepackage{float}
\usepackage{url}
\usepackage{subfigure}
\usepackage{dcolumn}
\usepackage{multirow}
\usepackage{verbatim,booktabs, array,arydshln}
\usepackage{color}
\usepackage{setspace}
\usepackage{bbm}
\usepackage{booktabs}
\usepackage{hhline}
\usepackage{bm}
\usepackage{dsfont,import}
\usepackage{enumerate}
\usepackage{threeparttable}
\usepackage{array}
\usepackage{caption}
\usepackage{tikz}
\usetikzlibrary{shapes.geometric, arrows, shadows}
\usepackage{graphicx}
\usetikzlibrary{shapes.geometric, arrows, shadows,positioning,arrows.meta,calc}
\usepackage{graphicx}
\usepackage{booktabs,multicol,multirow,tabularx,array}
\usepackage{longtable}
\usepackage{siunitx,array}
\usepackage{caption}
\usepackage{graphicx}

\usepackage{siunitx}

\newtheorem{theorem}{Theorem}[section]   

\theoremstyle{definition}

\theoremstyle{remark}

\theoremstyle{definition}

\theoremstyle{plain}

\tikzstyle{state} = [ellipse, draw, fill=gray!20, text centered, text width=1.5cm, minimum height=1.2cm, drop shadow]
\tikzstyle{arrow} = [thick,->,>=stealth]

\title{Statistical inference with win statistics in cluster-randomized trials with hierarchical composite outcomes}

\author{
 Xi Fang \\
    Department of Biostatistics \\
    Medical College of Wisconsin \\
    Milwaukee, WI, USA\\
\And
Guangyu Tong \\
 Department of Biostatistics \\
    Yale School of Public Health \\
    New Haven, CT, USA\\
\And
Yuan Huang\\
 Department of Biostatistics \\
    Yale School of Public Health \\
    New Haven, CT, USA\\
\And
F. Perry Wilson\\
Section of Nephrology\\
Department of Internal Medicine \\
Yale School of Medicine \\
New Haven, CT, USA\\
\And
Patrick J. Heagerty\\
Department of Biostatistics \\
University of Washington \\
Seattle, WA, USA\\
\And
 Fan Li \\
Department of Biostatistics \\
Yale School of Public Health \\
New Haven, CT, USA\\
  \texttt{fan.f.li@yale.edu} \\
}

\begin{document}
\maketitle
\begin{abstract}
Win statistics have become increasingly popular for analyzing hierarchical composite endpoints in clinical trials. The win ratio, win odds, net benefit, and desirability of outcome ranking (DOOR) share a pairwise-comparison framework and provide complementary summaries of treatment benefit. Despite recent progress in individually randomized trials, statistical inference for these measures in cluster-randomized trials (CRTs) remains underdeveloped. We provide a unified development and comparison of six testing procedures for all four win measures in parallel-arm CRTs: three Wald tests, two randomization-based tests, and a jackknife empirical likelihood ratio test. Through simulations with hierarchical semi-competing risks outcomes, we evaluate type I error rate and power across varying design parameters. With 20 clusters, the randomization-based procedures provided the most stable type I error rate control, while the clustered rank-sum Wald test with a \(t\)-distribution was the most reliable among the Wald tests, occasionally carrying a conservative test size. For the win ratio and win odds, the permutation test also yielded higher power than the clustered rank-sum Wald test. With 100 clusters, differences in type I error rate and power were small. These findings favor permutation inference in small CRTs, with the clustered rank-sum Wald test providing an analytic alternative, particularly for net benefit and DOOR. On the other hand, the analytic Wald procedures offer a computationally convenient choice for CRTs with a large number of clusters. We illustrate the methods by reanalyzing the STRIDE trial and implement all procedures in the \texttt{WinsCRT} R package.
\end{abstract}

\keywords{cluster-randomized trial; jackknife; permutation test; win ratio; U-statistics; type I error rate}

\section{Introduction}\label{intro}

Cluster-randomized trials (CRTs) are increasingly used to evaluate interventions implemented at the level of clinics, hospitals, schools, and communities, particularly in pragmatic and implementation research, where individual randomization may be infeasible or likely to lead to contamination \citep{hayes2017cluster}. In many CRTs, the intervention is expected to affect multiple dimensions of health, leading investigators to collect several clinically relevant outcomes that jointly characterize its benefits and harms. When individual outcome events are uncommon, analyzing each component separately may provide limited statistical precision and an incomplete picture of the intervention’s overall effect. A prespecified composite endpoint can combine information across components, provide a global assessment of benefit and harm, and reduce the multiplicity burden that would arise if each component were evaluated as a separate primary endpoint \citep{freemantle2003composite,huque2011addressing,mao2021statistical,guidance2017multiple}. Conventional composite endpoints, however, may treat events with substantially different clinical implications, such as death and hospitalization, as contributing equally to the treatment comparison. This limitation has motivated growing interest in hierarchical composite endpoints, which incorporate clinical knowledge by ranking outcome components according to their prespecified importance \citep{pocock2012win}.

Composite endpoints also introduce important dependence-related challenges. Even in an individually randomized trial, multiple outcome components recorded for the same individual may be correlated because they share individual-level risk factors and because the occurrence of one event may alter the subsequent risk or observability of another \citep{sankoh2014use,mou2025correlation}. CRTs introduce an additional layer of dependence because outcome vectors from different individuals within the same cluster may be correlated through shared environments, common care processes, or other cluster-level determinants \citep{murray2008design}. Valid hypothesis testing procedures for hierarchical composite endpoints in CRTs must therefore account simultaneously for dependence among component outcomes within individuals as well as dependence of the same outcome across individuals within clusters.

An increasingly popular approach to study composite outcomes is to base inference on win statistics defined through sequentially hierarchical outcomes using pairwise comparisons \citep{pocock2012win,dong2020win}.  Under this prioritized or hierarchical outcome structure, each pair consisting of one treated individual and one control individual is evaluated beginning with the most clinically important component, and the comparison proceeds to the next component in the priority order only when the higher-priority component results in a tie. If no component determines a winner, the pair is considered a tie overall. Thus, lower priority outcomes do not mask more important ones simply because they occur earlier, such as when hospitalization is considered together with death. Because the construction of win statistics depends only on relative ordering, it is invariant to monotone transformations and applies broadly to outcomes that can be ranked, which is particularly appealing for composite endpoints whose components differ in distributional form or scale \citep{zou2023parametric,buyse2010generalized}. The pairwise-comparison framework has motivated a family of win-based summary measures, including the win ratio \citep{pocock2012win}, win odds \citep{dong2020win}, net benefit \citep{buyse2010generalized}, and desirability of outcome ranking (DOOR) \citep{evans2015desirability,barnhart2025sample}, which provide complementary relative, absolute, and probabilistic assessments of treatment benefit under a common comparison rule. These win summary measures are all constructed based on the same underlying pairwise comparison principles, leverage the same win/loss/tie proportions as key ingredients, and the corresponding win statistics can be used to test the null hypothesis of no treatment effect in randomized clinical trials \citep{dong2020win}.

The statistical theory for win statistics has been developed in the independent data setting. For example, \citet{luo2015alternative} and \citet{bebu2016large} established large-sample inference for the win ratio by leveraging the theory of U-statistics. \citet{mao2019alternative} clarified the null and alternative hypotheses for win ratio as a testing procedure. \citet{buyse2010generalized} developed the generalized pairwise-comparison framework and the net benefit. Later, \citet{dong2023winstatistics} showed that the win ratio, win odds, and net benefit can complement one another as alternative summaries of the same underlying win, loss, and tie probabilities. In comparison, considerably less effort has been made to develop win statistics in CRTs. In a CRT, pairwise comparisons are formed from outcomes that are correlated within clusters, so the effective information is determined by the number of independent clusters rather than the number of individual individuals. As a result, variance estimation must respect that the clusters are the unit of randomization and that individual observations are correlated within clusters \citep{rosner1999use,zou2021confidence}. We are aware of select studies that provided some development of win statistics in CRTs. For example, \citet{zhang2021inference} considered hierarchical time-to-event composites under semi-competing risks and formulated the win ratio as a ratio of clustered U-statistics, and proposed asymptotically valid variance and covariance estimators under within-cluster dependence. \citet{davies2026confidence} developed inference for the win probability in cluster randomized trials with hierarchical composite endpoints by transforming hierarchical pairwise comparisons into individual-level win fractions and then applying a working linear mixed model to obtain cluster-adjusted point and variance estimators. However, this method was restricted to non-censored outcomes and did not address hierarchical time-to-event endpoints. More recently, \citet{fang2025sample} developed a new power and sample size methodology for win statistics under cluster randomization. They described testing procedures for win ratio, win odds, and net benefit based on the analytic randomization-based variances. Inspection of the analytic variance expressions makes explicit the roles of cluster-size variability and a rank-based analog of the intracluster correlation coefficient in determining the number of clusters required to achieve adequate power.

Despite these initial developments, several notable gaps remain in applying win statistics to CRTs. For example, \citet{zhang2021inference} and \citet{fang2025sample} focused solely on one specific win statistic (and summary measure) and one specific approach for conducting statistical inference, but did not cover the whole landscape of win statistics in CRTs. Table \ref{tab:lit_map} summarizes the possible testing procedures for each of the four win summary measures, and only a minor proportion of these possible testing procedures have been previously studied, leaving open the question of the optimal testing procedures that are best applicable to CRTs. Additionally, there has been no attempt to unify win ratio, win odds, net benefit, and DOOR under a common formulation in CRTs, nor has there been a systematic survey of different hypothesis-testing procedures and their finite-sample performance in realistic CRT settings, where the total number of randomized clusters is often limited. Therefore, in this work we address these various gaps by formally developing a range of inferential procedures outlined in Table \ref{tab:lit_map} under parallel-arm CRTs, and compare their empirical performance using simulations.  

\begin{table}[!htbp]
\centering
\caption{A survey and summary of inferential procedures for different win statistics in cluster randomized trials.}
\label{tab:lit_map}
\renewcommand{\arraystretch}{1}
\resizebox{\textwidth}{!}{%
\begin{tabular}{llcccc}
\toprule
Test procedure & Technical details & Win ratio & Win odds & Net benefit & DOOR \\
\midrule
\multirow[t]{2}{*}{\begin{tabular}[t]{@{}l@{}}Wald test\\ \footnotesize Section \ref{sec:wald_test}\end{tabular}}
& Clustered rank-sum Wald test & \citet{fang2025sample}& \citet{fang2025sample}& \citet{fang2025sample} & This paper \\
& Bivariate clustered $U$-statistic Wald test & \citet{zhang2021inference} & This paper & This paper & This paper \\
\midrule
\multirow[t]{2}{*}{\begin{tabular}[t]{@{}l@{}}Randomization-based tests\\ \footnotesize Section \ref{sec:score}\end{tabular}}
& Permutation test & This paper & This paper & This paper & This paper \\
& Cluster FS test & This paper & This paper& This paper & This paper \\
\midrule
\begin{tabular}[t]{@{}l@{}}Jackknife Wald test\\ \footnotesize Section \ref{sec:jk_var}\end{tabular} & Delete-one-cluster jackknife variance & This paper & This paper & This paper & This paper \\
\midrule
\begin{tabular}[t]{@{}l@{}}Likelihood ratio test\\ \footnotesize Section \ref{sec:jel}\end{tabular}
& Jackknife empirical likelihood ratio test & This paper & This paper & This paper & This paper \\
\bottomrule
\end{tabular}
}
\end{table}

The remainder of this paper is organized as follows. Section \ref{sec:win_stat} introduces a common pairwise-comparison framework in which the win ratio, win odds, net benefit, and DOOR are expressed as smooth functions of the same win, loss, and tie probabilities under a prespecified ranking rule. Building on the kernel representation in Section \ref{sec:test}, we develop six inferential procedures for CRTs that share the same underlying estimator but differ in how sampling variability is handled and the final test statistic is constructed, encompassing Wald tests in Section \ref{sec:wald_test} and \ref{sec:jk_var}, randomization-based tests in Section \ref{sec:score}, and jackknife empirical likelihood ratio inference in Section \ref{sec:jel}. Section \ref{sec:sim}  compares their finite-sample performance through extensive simulations, with particular emphasis on type I error rate control and power, to provide practical guidance for their use. Section \ref{sec:real_dat} illustrates these procedures by reanalyzing Strategies to Reduce Injuries and Develop Confidence in Elders (STRIDE), a pragmatic CRT with a composite outcome comprising fatal events and time to first fall-related injury. Section \ref{sec:discussion} concludes with practical recommendations and directions for future research. All methods are implemented in the \texttt{WinsCRT} R package, available on CRAN at \url{https://cran.r-project.org/web/packages/WinsCRT/index.html}

\section{Formulation of Win statistics in cluster-randomized trials}  \label{sec:win_stat}

Let \(M\) denote the number of clusters in a parallel-arm CRT. A proportion \(q \in (0,1)\) of clusters, say \(qM\), are randomized to the intervention and the remaining \((1-q)M\) clusters receive usual care. For cluster \(i \in \{1,\dots,M\}\), let \(N_i\) be the number of individuals in cluster \(i\), and let \(A_i \in \{0,1\}\) denote the cluster-level treatment assignment, with \(A_i = 1\) indicating intervention and \(A_i = 0\) control. Under complete randomization we have \(q = \mathbb{E}(A_i)\). For each individual \(j \in \{1,\dots,N_i\}\) in cluster \(i\), we observe a vector of outcomes \(\bm{Y}_{ij} = (Y_{ij1},\dots,Y_{ijV})^\top\), where \(Y_{ijv} \in \mathcal{Y}_v\) is the \(v\)-th component endpoint and the components are ordered from most to least clinically important. This framework accommodates conventional single-endpoint trials when \(V=1\) as well as hierarchical composite endpoints when \(V \ge 2\). We assume that clusters are independent, so \(\{\bm{Y}_{i1},\ldots,\bm{Y}_{iN_i}\}\perp \{\bm{Y}_{k1},\ldots,\bm{Y}_{kN_k}\}\) for all \(i\neq k\). Within a cluster, however, we allow arbitrary dependence both among the components of a individual's outcome vector and among the outcome vectors of different individuals belonging to that cluster. Let \(n_1 = \sum_{i=1}^M A_i N_i\), \(n_0 = \sum_{i=1}^M (1-A_i) N_i\), and \(n = n_1 + n_0\) denote the total numbers of individuals in the intervention, control, and overall, respectively. To define win summary measures, we consider all pairwise comparisons between individuals from different treatment arms. Specifically,  we take an arbitrary pair consisting of individual \(j\) in cluster \(i\) and individual \(l\) in cluster \(k\), denoted \((i,j)\) and \((k,l)\), with \(A_i \ne A_k\), and then a pre-specified hierarchical clinical ranking rule can be used to compare \(\bm{Y}_{ij}\) and \(\bm{Y}_{kl}\) sequentially along the ordered components until a win (loss) is determined or a tie is reached. That is, we write \(\bm{Y}_{ij} \succ \bm{Y}_{kl}\) if \((i,j)\) has a more favorable outcome than \((k,l)\) according to this rule, \(\bm{Y}_{ij} \prec \bm{Y}_{kl}\) if \((k,l)\) is more favorable, and \(\bm{Y}_{ij} = \bm{Y}_{kl}\) if the pair is tied. For example, in a hierarchical endpoint that ranks all-cause mortality above hospitalization, we say \(\bm{Y}_{ij} \succ \bm{Y}_{kl}\) if individual \((i,j)\) survives but \((k,l)\) dies, or if both survive but \((i,j)\) is hospitalized later (or fewer times) than \((k,l)\). A win is recorded when \(\bm{Y}_{ij} \succ \bm{Y}_{kl}\), a loss is recorded when \(\bm{Y}_{ij} \prec \bm{Y}_{kl}\), and a tie is recorded when neither outcome is preferable \citep{pocock2012win}.

We allow $N_i$ to vary randomly across clusters but assume it is non-informative, meaning that the marginal distribution of individual outcomes under each treatment arm does not depend on cluster size. Consequently, the subsequent individual-pair-level expectations and probabilities are well-defined with respect to the marginal distribution of pairs (one selected from each arm). Consider two distinct clusters \(i\neq k\), independently sampled from this target cluster population, and let \((i,j)\) and \((k,l)\) denote the individuals selected independently and uniformly within their respectively clusters.  We define \(\pi_{\mathrm{win}}=\mathbb{P}(\bm{Y}_{ij}\succ\bm{Y}_{kl}\mid A_i=1,A_k=0)\) as the probability that the intervention individual has a more favorable outcome than the control individual, \(\pi_{\mathrm{loss}}=\mathbb{P}(\bm{Y}_{ij}\prec\bm{Y}_{kl}\mid A_i=1,A_k=0)\) as the probability of the opposite comparison,  and \(\pi_{\mathrm{tie}}=\mathbb{P}(\bm{Y}_{ij}=\bm{Y}_{kl}\mid A_i=1,A_k=0)\) as the probability of a tie under the prespecified comparison rule. Under the non-informative cluster size assumption, the population win estimands defined here admit either a cluster-pair interpretation or the individual-pair interpretation considered in \citet{lee2026s,chen2026optimal}, because equal cluster-pair weighting and individual-pair weighting yield the same marginal win, loss, and tie probabilities. A proof is provided in \ref{app:equiv}. Based on these quantities, we consider four population-level win measures, including the win difference (net benefit), win ratio, win odds, and desirability of outcome ranking (DOOR), which are defined as
\begin{align*}
    W_D &= \pi_{\text{win}} - \pi_{\text{loss}}, \quad  W_R = \frac{\pi_{\text{win}}}{\pi_{\text{loss}}},\\
    W_O &= \frac{\pi_{\text{win}} + 0.5\,\pi_{\text{tie}}}{\pi_{\text{loss}} + 0.5\,\pi_{\text{tie}}}, \quad  \text{DOOR} = \pi_{\text{win}} + 0.5\,\pi_{\text{tie}}.
\end{align*}

Each quantity has a different interpretation. Win difference quantifies the absolute excess probability that a randomly chosen individual from the intervention individual population has a more favorable outcome than a randomly chosen individual from the control individual population. The win ratio is the ratio of the probability of a treatment win to the probability of a treatment loss. Win odds modifies the win ratio by assigning half of the tie probability to wins and half to losses, therefore stabilizing the win statistics when ties are common. DOOR represents the probability that a randomly selected treated individual has a more desirable outcome than a randomly selected control individual, plus half the probability that their outcomes are tied. This quantity coincides with the Mann--Whitney parameter for the underlying hierarchical comparison and can be interpreted as the mean ``desirability score'' for treatment versus control. These four win statistics provide complementary absolute, relative, and probabilistic summaries of the same underlying win/loss/tie structure for hierarchical composite endpoints in CRTs. To construct consistent estimators for the win measures, we aggregate the observed wins, losses, and ties over all cross-arm pairs. That is, we first let
\begin{align*}
W &= \sum_{i=1}^{M} \sum_{j=1}^{N_i} \sum_{k=1}^{M} \sum_{l=1}^{N_k}
A_i (1-A_k)\,\mathbb{I}\{\bm{Y}_{ij} \succ \bm{Y}_{kl}\},\\
L &= \sum_{i=1}^{M} \sum_{j=1}^{N_i} \sum_{k=1}^{M} \sum_{l=1}^{N_k}
A_i (1-A_k)\,\mathbb{I}\{\bm{Y}_{ij} \prec \bm{Y}_{kl}\},\\
T &= \sum_{i=1}^{M} \sum_{j=1}^{N_i} \sum_{k=1}^{M} \sum_{l=1}^{N_k}
A_i (1-A_k)\,\mathbb{I}\{\bm{Y}_{ij} = \bm{Y}_{kl}\},   
\end{align*}
denote the total numbers of wins, losses, and ties, respectively, with \(\mathbb{I}(\cdot)\) the indicator function. By construction, each treated-control pair contributes exactly one of win, loss, or tie, so \(W+L+T = n_1 n_0\). Plug-in estimators of the win-based statistics are then
\[
\widehat{W}_D = \frac{W - L}{n_1 n_0}, \qquad
\widehat{W}_R = \frac{W}{L}, \qquad
\widehat{W}_O = \frac{W + 0.5 T}{L + 0.5 T}, \qquad
\widehat{\text{DOOR}} = \frac{W + 0.5 T}{n_1 n_0}.
\]
The win ratio estimator requires \(L>0\), whereas the win odds estimator requires \(L+0.5T>0\). It is often useful to work on the transformed scales. For the ratio-type win statistics, their relationship with the difference-type win statistics \(W_D\) and the tie probability can be expressed as \(g(W_D, \pi_{\text{tie}})\), for example,
\begin{align} \label{eq:WD_transform}
 \log(\widehat W_R)= 2\,\mathrm{atanh} \left( \frac{\widehat W_D}{1-\widehat{\pi}_{\text{tie}}}\right),\quad \log(\widehat W_O) &= 2\,\mathrm{atanh}(\widehat W_D),\quad \widehat{\text{DOOR}} = \frac{1}{2}\bigl\{1+\widehat W_D\bigr\},   
\end{align}
where \(\mathrm{atanh}(x) = 2^{-1}\log\!\left\{(1+x)/(1-x)\right\}\) for \(-1<x<1\), and \(\widehat{\pi}_{\text{tie}} = T/(n_1 n_0)\) is the empirical tie probability. Thus, all four estimators are constructed from the same empirical win, loss, and tie probabilities. The win odds and DOOR are one-to-one transformations of \(\widehat W_D\), whereas the win ratio additionally depends on the empirical tie probability.

\section{Developing win test statistics}
\label{sec:test}
To construct test statistics for the win difference $W_D=\pi_{\text{win}}-\pi_{\text{loss}}$, we use the hierarchical comparison rule discussed in Section \ref{sec:win_stat}. Let $\mathcal{W}(\bm{Y}_{ij},\bm{Y}_{kl})\in\{-1,0,1\}$ be the signed score produced by this rule for any pair $(i,j)$ and $(k,l)$, where $+1$ indicates a win, $-1$ a loss, and $0$ a tie. Writing $s_{ij,kl}=\mathcal{W}(\bm{Y}_{ij},\bm{Y}_{kl})$ and restricting to treated control pairs leads to the following estimator of WD:
\begin{equation} \label{eq:cross_arm}
    \widehat{W}_D =\frac{1}{n_1n_0}\sum_{i=1}^{M}\sum_{j=1}^{N_i}\sum_{k=1}^{M}\sum_{l=1}^{N_k} A_i(1-A_k)\,s_{ij,kl} =\frac{W-L}{n_1n_0}. 
\end{equation}
By construction, $\mathcal{W}$ is antisymmetric, $s_{ij,kl}=-s_{kl,ij}$ with $s_{ij,ij}=0$, so within-treatment comparisons contribute zero after aggregation. Rearranging the summations gives a rank sum expression in the spirit of the Finkelstein-Schoenfeld approach \citep{finkelstein1999combining},
\begin{equation} \label{eq:cross_all}
\widehat{W}_D =\frac{1}{n_1n_0}\sum_{i=1}^{M}\sum_{j=1}^{N_i}A_i \left\{\sum_{k=1}^{M}\sum_{l=1}^{N_k}s_{ij,kl}\right\}, 
\end{equation}
where the inner sum is the rank accumulated by treated individual $(i,j)$ against all other individuals across clusters. Thus, $\widehat{W}_D$ can be interpreted as the average net score of treated individuals against the full study population, except that the within-treatment contributions are canceled by antisymmetry. Since this net score is computed from pairwise comparisons aggregated over all individuals, \(\widehat{W}_D\) in \eqref{eq:cross_all} is equivalently the average rank of treated individuals in the pooled sample. This representation clarifies the close connection between win difference estimation and classical two-sample rank procedures. In the simplest case $V=1$ with fully observed outcomes, the comparison rule $\mathcal{W}$ induces a complete order over all individuals, where every pair $(i,j)$ and $(k,l)$ can be unambiguously ranked, so $s_{ij,kl}$ reduces to the usual pairwise sign kernel of the Mann-Whitney-Wilcoxon statistic and $\widehat{W}_D$ is a rescaled Wilcoxon rank-sum statistic (or equivalently, $2\,\widehat{\mathrm{DOOR}}-1$). When time-to-event outcomes are subject to censoring, however, $\mathcal{W}(\cdot)$ can no longer resolve every comparison. If both individuals in a pair are censored before a winner can be determined, neither $+1$ nor $-1$ can be assigned and the pair is recorded as a tie ($s_{ij,kl}=0$). The comparison rule thus induces only a partial order over the study population, with some pairs left unranked, and the net score now accumulates over resolvable comparisons only. As a result, $\widehat{W}_D$ in \eqref{eq:cross_all} retains its net score interpretation but now refers to this partial ranking rather than a complete one; see \citet{fang2025sample} for additional discussions. The win statistics framework accommodates this aspect naturally, because $\mathcal{W}$ can incorporate Gehan's rule \citep{gehan1965generalized} for time-to-event components, treating unresolvable pairs as ties, without requiring any modification to the estimator itself.

These properties show that the same estimator $\widehat{W}_D$ admits different, algebraically equivalent decompositions in terms of the kernel $s_{ij,kl}$. In CRTs, however, these decompositions can lead to different inferential procedures, because clusters rather than individual pairs constitute the independent units, and the choice of decomposition determines how one constructs a valid variance estimator and obtains the resulting test statistic for 

\begin{equation}\label{eq:null}
H_0:\tau=\tau_0,\qquad
\tau_0=
\begin{cases}
0, & \tau\in\{W_D,\log(W_R),\log(W_O)\},\\
1/2, & \tau=\mathrm{DOOR}.
\end{cases}
\end{equation}

In what follows, we consider six procedures in three families. The first comprises three Wald tests with different variance estimation strategies. The clustered rank-sum Wald test of Fang, Cao, and Li (FCL) exploits the cluster-score representation of $\widehat{W}_D$ in \eqref{eq:cross_all} to estimate its variance and extends inference to the other win measures through the delta method \citep{fang2025sample}. The bivariate clustered $U$-statistic Wald test of Zhang and Jeong (ZJ) instead treats $(\widehat\pi_{\mathrm{win}},\widehat\pi_{\mathrm{loss}})$ as a bivariate clustered $U$-statistic based on \eqref{eq:cross_arm} and obtains the variance of each win measure from their joint covariance matrix \citep{zhang2021inference}. The delete-one-cluster jackknife Wald test estimates sampling variability by recomputing the selected estimator after removing each cluster in turn, without requiring an analytical variance formula \citep{lee2026s}. The second family consists of two randomization-based tests. The permutation test obtains its reference distribution under the sharp null by reassigning treatment labels across clusters according to the allocation mechanism and recomputing the selected win statistic for each assignment. The cluster Finkelstein--Schoenfeld (FS) test uses the cluster-score representation in \eqref{eq:cross_all}, standardizes the assignment weighted cluster score sum using its analytical randomization variance, and applies a normal approximation to obtain the reference distribution \citep{finkelstein1999combining}. The third family is represented by the jackknife empirical likelihood (JEL) ratio test, which converts the leave-one-cluster recomputations into pseudo-values and uses a Wilks-type Chi-squared reference distribution without specifying a parametric likelihood \citep{peng2018jackknife}. We provide the relevant detail of each approach below.

\subsection{Wald tests} \label{sec:wald_test}

For $\tau\in\{W_D,\log(W_R),\log(W_O),\mathrm{DOOR}\}$ with plug-in estimator $\widehat\tau$ from Section \ref{sec:win_stat}, Wald-type inference is based on \(Z_\tau=\frac{\widehat\tau-\tau_0}{\widehat\sigma_\tau}\), with $\tau_0=0$ under the null hypothesis for $W_D$, $\log(W_R)$, and $\log(W_O)$ and $\tau_0=1/2$ for $\mathrm{DOOR}$. Because each $\widehat\tau$ is a smooth transformation of $(\widehat W_D,\widehat\pi_{\mathrm{tie}})$, we can estimate the variance of $\widehat W_D$ and extend to $\widehat\tau$ by the delta method using the transformations in Section \ref{sec:win_stat}. To obtain cluster-level inference, we use the fact that $\widehat W_D$ admits a cluster-score representation in \eqref{eq:cross_all} \citep{fang2025sample}. Define the centered cluster score \(S_i=\sum_{j=1}^{N_i}\sum_{k=1}^{M}\sum_{l=1}^{N_k}s_{ij,kl}\), so that \(\sum_{i=1}^M S_i=0\). Then \(\widehat W_D=\frac{1}{n_1n_0}\sum_{i=1}^M A_i S_i\), which makes explicit that the $M$ cluster-level scores $S_i$ govern first-order variation under independent clusters. Because $\widehat W_D$ is a pairwise statistic induced by the kernel $s_{ij,kl}$, it admits a Hoeffding--Hájek first-order projection onto cluster-level scores under independent clusters. Thus, the variance can be expressed as the arm-specific second moments of $S_i$. Writing $\sigma_a^2=\mathrm{Var}(S_i\mid A_i=a)$ for $a\in\{0,1\}$, this implies that $\widehat W_D$ is asymptotically normal with variance
\begin{equation}  \label{eq:wald_cluster_score_var}
    \sigma_{W_D}^2=\left(\frac{M q(1-q)}{n_1n_0}\right)^2\left\{\frac{\sigma_1^2}{qM}+\frac{\sigma_0^2}{(1-q)M}\right\}.
\end{equation}
A consistent estimator of $\sigma_{W_D}^2$ is obtained by replacing $\sigma_a^2$ with the sample variance of $S_i$ within arm $A_i=a$. Specifically, let \(\overline{S}_a=\{q^a(1-q)^{1-a}M\}^{-1}\sum_{i=1}^M \mathbb{I}(A_i=a)\,S_i\), \(\widehat\sigma_a^2=\{q^a(1-q)^{1-a}M-1\}^{-1}\sum_{i=1}^M \mathbb{I}(A_i=a)\,(S_i-\overline{S}_a)^2 \), for $a\in\{0,1\}$. Then the variance estimator for $\widehat W_D$ is
\[\widehat\sigma_{W_D}^2=\left(\frac{M q(1-q)}{n_1n_0}\right)^2\left\{\frac{\widehat\sigma_1^2}{qM}+\frac{\widehat\sigma_0^2}{(1-q)M}\right\},
\]
which is valid under arbitrary within-cluster dependence and accommodates unequal cluster sizes through the realized scores $S_i$. Because the remaining win statistics are smooth one-to-one transformations of $(W_D,\pi_{\mathrm{tie}})$ in \eqref{eq:WD_transform}, Wald inference for each $\tau\in\{W_D,\log(W_R),\log(W_O),\mathrm{DOOR}\}$ follows by the delta method. In particular, the corresponding plug-in variance estimators are
\[\widehat\sigma_{\log W_R}^2=\left[\frac{2\{1/(1-\widehat\pi_{\mathrm{tie}})\}}{1-\left(\{1/(1-\widehat\pi_{\mathrm{tie}})\}\widehat W_D\right)^2}\right]^2\widehat\sigma_{W_D}^2,\qquad
\widehat\sigma_{\log W_O}^2=\frac{4\,\widehat\sigma_{W_D}^2}{(1-\widehat W_D^{\,2})^2}.
\]
For $\mathrm{DOOR}$, the linear relation $\widehat{\mathrm{DOOR}}=(1+\widehat W_D)/2$ yields $\widehat\sigma_{\mathrm{DOOR}}^2=\widehat\sigma_{W_D}^2/4$. Thus, the Wald test is constructed by $\widehat\tau/\widehat\sigma_\tau$ and using the standard normal reference distribution (or a $t$ reference with $M-2$ degrees of freedom as a small-sample adjustment). Formal regularity conditions and proofs of these analytical results are provided in  \ref{app:wald}.

An alternative Wald test follows \citet{zhang2021inference}, where the win statistics are treated as bivariate clustered U-statistics in \eqref{eq:cross_arm} and derive the joint asymptotic distribution of their win and loss components (in their original paper, only the win ratio was investigated, and we make extensions to all remaining win measures). Let $M_a=\sum_{i=1}^M \mathbb{I}(A_i=a)$ for $a\in\{0,1\}$ with  $M_0$ and $M_1$ representing the number of control and treated clusters, respectively.  Define the arm-specific mean cluster sizes \(\overline N_a = M_a^{-1}\sum_{i:A_i=a}N_i \) for $a\in\{0,1\}$, so that $\overline N_1=n_1/M_1$ and $\overline N_0=n_0/M_0$. Define two U-statistics for wins and losses, 
\[
U_{\mathrm{win}}=\frac{1}{M_1M_0}\sum_{i:A_i=1}\sum_{k:A_k=0}\sum_{j=1}^{N_i}\sum_{l=1}^{N_k}\mathbb{I}\{\bm Y_{ij}\succ \bm Y_{kl}\},
\qquad
U_{\mathrm{loss}}=\frac{1}{M_1M_0}\sum_{i:A_i=1}\sum_{k:A_k=0}\sum_{j=1}^{N_i}\sum_{l=1}^{N_k}\mathbb{I}\{\bm Y_{ij}\prec \bm Y_{kl}\}.
\]
The corresponding plug-in estimators of the win, loss, and tie probabilities are
\[
\widehat\pi_{\mathrm{win}}=\frac{U_{\mathrm{win}}}{\overline N_1\overline N_0},\qquad
\widehat\pi_{\mathrm{loss}}=\frac{U_{\mathrm{loss}}}{\overline N_1\overline N_0},\qquad
\widehat\pi_{\mathrm{tie}}=1-\widehat\pi_{\mathrm{win}}-\widehat\pi_{\mathrm{loss}}.
\]
Thus, all four estimators in Section \ref{sec:win_stat} can be written as smooth functions of the pair $(\widehat\pi_{\mathrm{win}},\widehat\pi_{\mathrm{loss}})$, and equivalently, of $(\widehat\pi_{\mathrm{win}},\widehat\pi_{\mathrm{loss}},\widehat\pi_{\mathrm{tie}})$:
\[
\widehat{W}_D=\widehat{\pi}_{\mathrm{win}}-\widehat{\pi}_{\mathrm{loss}},\qquad 
\widehat{W}_R=\frac{\widehat{\pi}_{\mathrm{win}}}{\widehat{\pi}_{\mathrm{loss}}},\qquad
\widehat{W}_O=\frac{\widehat{\pi}_{\mathrm{win}}+0.5\,\widehat{\pi}_{\mathrm{tie}}}{\widehat{\pi}_{\mathrm{loss}}+0.5\,\widehat{\pi}_{\mathrm{tie}}},\qquad
\widehat{\mathrm{DOOR}}=\widehat{\pi}_{\mathrm{win}}+0.5\,\widehat{\pi}_{\mathrm{tie}}.
\]
\citet{zhang2021inference} proved that, as $M\to\infty$, the vector
\[
\Bigl(U_{\mathrm{win}}-\overline N_1\overline N_0\pi_{\mathrm{win}},\; U_{\mathrm{loss}}-\overline N_1\overline N_0\pi_{\mathrm{loss}}\Bigr)^\top
\]
is asymptotically mean-zero bivariate normal with covariance matrix admitting the decomposition
\begin{equation}\label{zhang_var}
\Sigma_U= \frac{\overline N_0^{\,2}}{M_1}\Sigma_1+\frac{\overline N_1^{\,2}}{M_0}\Sigma_0,
\end{equation}
where $\Sigma_1$ and $\Sigma_0$ are the treated- and control-arm covariance matrices built from within-cluster sums of centered conditional win/loss projections. Let $\widehat\Sigma_U$ denote a consistent plug-in estimator of $\Sigma_U$ so that
\[
\widehat\Sigma_\pi=\frac{\widehat\Sigma_U}{(\overline N_1\overline N_0)^2} = 
\begin{pmatrix}
\widehat{\mathrm{Var}}(\widehat\pi_{\mathrm{win}}) & \widehat{\mathrm{Cov}}(\widehat\pi_{\mathrm{win}},\widehat\pi_{\mathrm{loss}})\\
\widehat{\mathrm{Cov}}(\widehat\pi_{\mathrm{win}},\widehat\pi_{\mathrm{loss}}) & \widehat{\mathrm{Var}}(\widehat\pi_{\mathrm{loss}})
\end{pmatrix}
\]
is the variance-covariance matrix for $(\widehat\pi_{\mathrm{win}},\widehat\pi_{\mathrm{loss}})^\top$. For any win statistics with a smooth transformation
$\tau=g(\pi_{\mathrm{win}},\pi_{\mathrm{loss}})$ with plug-in estimator $\widehat\tau=g(\widehat\pi_{\mathrm{win}},\widehat\pi_{\mathrm{loss}})$,
the delta method gives \(\widehat\sigma_{\tau}^2=\nabla g(\widehat\pi_{\mathrm{win}},\widehat\pi_{\mathrm{loss}})^\top\,\widehat\Sigma_\pi\,\nabla g(\widehat\pi_{\mathrm{win}},\widehat\pi_{\mathrm{loss}})\), with
\[
\nabla \widehat{W}_D=(1,-1)^\top,\qquad
\nabla \log \widehat{W}_R=\left(\frac{1}{\widehat\pi_{\mathrm{win}}},-\frac{1}{\widehat\pi_{\mathrm{loss}}}\right)^\top,\qquad
\nabla \log \widehat{W}_O=\frac{2}{1-\widehat W_D^{\,2}}(1,-1)^\top,\qquad
\nabla \widehat{\mathrm{DOOR}}=\frac{1}{2}(1,-1)^\top.
\]
To make the matrices $\Sigma_1$ and $\Sigma_0$ explicit, define the win and loss kernels
\[
\phi_{\mathrm{win}}(\bm Y_{ij},\bm Y_{kl})=\mathbb{I}\{\bm Y_{ij}\succ \bm Y_{kl}\},\qquad
\phi_{\mathrm{loss}}(\bm Y_{ij},\bm Y_{kl})=\mathbb{I}\{\bm Y_{ij}\prec \bm Y_{kl}\}.
\]
For each arm, we define the centered first-order Hájek projections of the win and loss kernels by conditioning on the observed outcome of one subject and integrating over a generic draw from the opposite arm.
For a treated subject $(i,j)$ with $A_i=1$, we write
\[
\varphi^{\mathrm{win}}_{1}(\bm Y_{ij})=\mathbb{E}\!\left\{\phi_{\mathrm{win}}(\bm Y_{ij},\bm Y_{kl})\,\middle|\,\bm Y_{ij},\,A_i=1,\,A_k=0\right\}-\pi_{\mathrm{win}},\qquad \varphi^{\mathrm{loss}}_{1}(\bm Y_{ij})= \mathbb{E}\!\left\{\phi_{\mathrm{loss}}(\bm Y_{ij},\bm Y_{kl})\,\middle|\,\bm Y_{ij},\,A_i=1,\,A_k=0\right\}-\pi_{\mathrm{loss}},
\]
where the conditional expectation integrates over a generic control draw $(k,l)$ with $A_k=0$. Similarly, for a control subject $(k,l)$ with $A_k=0$, we write
\[\varphi^{\mathrm{win}}_{0}(\bm Y_{kl})=\mathbb{E}\!\left\{\phi_{\mathrm{win}}(\bm Y_{ij},\bm Y_{kl})\,\middle|\,\bm Y_{kl},\,A_i=1,\,A_k=0\right\}-\pi_{\mathrm{win}},\qquad\varphi^{\mathrm{loss}}_{0}(\bm Y_{kl})=\mathbb{E}\!\left\{\phi_{\mathrm{loss}}(\bm Y_{ij},\bm Y_{kl})\,\middle|\,\bm Y_{kl},\,A_i=1,\,A_k=0\right\}-\pi_{\mathrm{loss}},
\]
where the conditional expectation integrates over a generic treated draw $(i,j)$ with $A_i=1$. Then we define the cluster-level projected sums as
\[
G_i^{\mathrm{win}}=\sum_{j=1}^{N_i}\varphi^{\mathrm{win}}_{A_i}(\bm Y_{ij}),
\qquad
G_i^{\mathrm{loss}}=\sum_{j=1}^{N_i}\varphi^{\mathrm{loss}}_{A_i}(\bm Y_{ij}),
\qquad
\bm G_i=\bigl(G_i^{\mathrm{win}},\,G_i^{\mathrm{loss}}\bigr)^\top,
\]
and the treated- and control-arm covariance matrices in \eqref{zhang_var} can be written as
\[
\Sigma_1=\mathrm{Var}\!\left(\bm G_i\mid A_i=1\right),\qquad
\Sigma_0=\mathrm{Var}\!\left(\bm G_i\mid A_i=0\right).
\]
Under the within-arm exchangeability assumptions of \citet{zhang2021inference}, each entry of $\Sigma_a$ ($a\in\{0,1\}$) decomposes into a within-subject term and a within-cluster (between-subject) term. Let
\(\overline N_a^{(2)}=\frac{1}{M_a}\sum_{i:A_i=a}N_i(N_i-1)\). Then, for $a\in\{0,1\}$,
\[
\mathrm{Var}\!\left(G_i^{\mathrm{win}}\mid A_i=a\right)=\overline N_a\,\mathrm{Var}\!\left\{\varphi^{\mathrm{win}}_{a}(\bm Y_{ij})\mid A_i=a\right\}
+\overline N_a^{(2)}\,\mathrm{Cov}\!\left\{\varphi^{\mathrm{win}}_{a}(\bm Y_{ij}),\varphi^{\mathrm{win}}_{a}(\bm Y_{ij'})\mid A_i=a\right\},
\]
\[
\mathrm{Var}\!\left(G_i^{\mathrm{loss}}\mid A_i=a\right)=\overline N_a\,\mathrm{Var}\!\left\{\varphi^{\mathrm{loss}}_{a}(\bm Y_{ij})\mid A_i=a\right\} +\overline N_a^{(2)}\,\mathrm{Cov}\!\left\{\varphi^{\mathrm{loss}}_{a}(\bm Y_{ij}),\varphi^{\mathrm{loss}}_{a}(\bm Y_{ij'})\mid A_i=a\right\},
\]
\[
\mathrm{Cov}\!\left(G_i^{\mathrm{win}},G_i^{\mathrm{loss}}\mid A_i=a\right)=\overline N_a\,\mathrm{Cov}\!\left\{\varphi^{\mathrm{win}}_{a}(\bm Y_{ij}),\varphi^{\mathrm{loss}}_{a}(\bm Y_{ij})\mid A_i=a\right\}+\overline N_a^{(2)}\,\mathrm{Cov}\!\left\{\varphi^{\mathrm{win}}_{a}(\bm Y_{ij}),\varphi^{\mathrm{loss}}_{a}(\bm Y_{ij'})\mid A_i=a\right\},
\]
where $j\neq j'$ are distinct individuals in the same cluster. Plugging these entries into $\Sigma_1$ and $\Sigma_0$ gives the within-subject and within-cluster contributions for the covariance matrix of $(U_{\mathrm{win}},U_{\mathrm{loss}})$.  \ref{app:wald_u} provides the regularity conditions for the theory of bivariate clustered U-statistic of \citet{zhang2021inference} along with the proof for completeness. For implementation,   \ref{app:wald_u} provides explicit derivations for computing $\widehat\sigma_{W_D}$, $\widehat\sigma_{\log W_R}$, $\widehat\sigma_{\log W_O}$, and $\widehat\sigma_{\mathrm{DOOR}}$ via the delta method.

\subsection{Randomization-based tests} \label{sec:score}

The permutation test and the cluster Finkelstein--Schoenfeld (FS) test are closely connected to the Wald tests in Section \ref{sec:wald_test}, where both are driven by the same cluster-score representation $\widehat{W}_D = \frac{1}{n_1 n_0}\sum_{i=1}^M A_i S_i$, but they differ in how the reference distribution is obtained. Instead of estimating a variance and appealing to a normal or \(t\) approximation, the randomization-based test uses the cluster randomization mechanism directly. The two procedures we consider differ in the version of the null hypothesis \eqref{eq:null} they target. The first is the permutation test, which targets the sharp null of no treatment effect on any individual's outcome distribution, under which the joint distribution of $(\bm{Y}_{i1},\dots,\bm{Y}_{iN_i})$ is invariant to the cluster-level assignment $A_i$. This permutation test is obtained by recomputing the test statistic over all \(\binom{M}{qM}\) treatment-label permutations that preserve exactly \(qM\) treated clusters. The two-sided \(p\)-value is defined as the proportion of permuted test statistics that are at least as extreme as the observed value of \(\widehat{W}_D\), \(\log(\widehat{W}_R)\), \(\log(\widehat{W}_O)\), or \(\widehat{\mathrm{DOOR}}\). 

The second procedure follows the Finkelstein-Schoenfeld approach \citep{finkelstein1999combining} and works directly with the cluster scores \(S_i\). In contrast to the permutation test, this procedure targets the weak null in \eqref{eq:null}, which only requires $\tau = \tau_0$ at the population level and does not restrict higher-order features of the outcome distribution across arms. In this approach, the observed outcomes, and hence the scores \(S_i\), are treated as fixed, and the variability arises only through the treatment assignment mechanism \(A_1,\dots,A_M\). In our design setup, \((A_1,\dots, A_M)\) is a simple random sample without replacement from a finite population containing \(qM\) intervention clusters and \((1-q)M\) control clusters. It follows that the variance of the assignment-weighted contrast \(\sum_{i=1}^M (A_i-q)S_i\) has the closed-form expression
\begin{equation} \label{es:clustered_fs}
   \operatorname{Var}\!\left\{\sum_{i=1}^M (A_i-q)S_i \,\middle|\, S_1,\ldots,S_M\right\} = \frac{q(1-q)M}{M-1}\sum_{i=1}^M S_i^2, 
\end{equation}
using \(\overline S=M^{-1}\sum_{i=1}^M S_i=0\). This leads to the cluster FS test statistic
\[\frac{\sum_{i=1}^M A_i S_i}{\sqrt{\dfrac{q(1-q)M}{M-1}\sum_{i=1}^M S_i^2}}.\]
Unlike the permutation test, this strategy uses the finite-population variance of the assignment-weighted contrast as the basis for inference. Similar test statistics can be constructed for other win statistics (e.g., WR, WO, DOOR) by transforming the permuted statistics using the smooth transformation in \eqref{eq:WD_transform}. Additional derivations and implementation details are provided in  \ref{app:score}.

\subsection{Jackknife Wald test} \label{sec:jk_var}

A fully nonparametric alternative to the analytic variance estimators in Section \ref{sec:wald_test} is obtained via a delete-one-cluster jackknife. Rather than deriving the variance of $\widehat\tau$ through asymptotic projection arguments or the delta method, the jackknife directly estimates sampling variability by measuring how much $\widehat\tau$ changes when each cluster is removed in turn, treating clusters as the independent units throughout. This makes the approach agnostic to the within-cluster dependence structure and automatically accommodates heterogeneous cluster sizes, without requiring any analytic variance formula. For each cluster $i\in\{1,\ldots,M\}$, let $\widehat\tau_{(-i)}$ denote the same plug-in estimator recomputed after removing cluster $i$ and all its individuals, and define the average leave-one-out estimate $\overline\tau_{(-)}=M^{-1}\sum_{i=1}^M\widehat\tau_{(-i)}$. The delete-one-cluster jackknife variance estimator is
\[\widehat\sigma_{\tau,\mathrm{JK}}^{\,2}=\frac{M-1}{M}\sum_{i=1}^M\left(\widehat\tau_{(-i)}-\overline\tau_{(-)}\right)^2.\]
The jackknife studentized Wald statistic is then
\[Z_{\tau,\mathrm{JK}}=\frac{\widehat\tau-\tau_0}{\widehat\sigma_{\tau,\mathrm{JK}}},\]
under the same null as in Section \ref{sec:wald_test}. In practice, $Z_{\tau,\mathrm{JK}}$ may be referenced to a standard normal limit or to a $t$ distribution with $M-2$ degrees of freedom as a small-sample adjustment. A key practical advantage of the jackknife is that it applies uniformly to all four win statistics $\tau\in\{W_D,\log(W_R),\log(W_O),\mathrm{DOOR}\}$ without requiring separate delta-method calculations for each transformation, since the leave-one-out recomputations automatically propagate through any given smooth function defining $\widehat\tau$.

\subsection{Jackknife empirical likelihood ratio test} \label{sec:jel}

Building on the same leave-one-cluster recomputations $\{\widehat\tau_{(-i)}\}_{i=1}^M$ from Section \ref{sec:jk_var}, we also consider a likelihood ratio test through jackknife empirical likelihood (JEL). Rather than studentizing $\widehat\tau$ with a variance estimate as in the Wald approach, JEL constructs a nonparametric likelihood directly from the data and yields a Wilks-type Chi-squared reference distribution without specifying any parametric model \citep{qin1994empirical,jing2009jackknife,peng2018jackknife}. The key step is to convert the leave-one-cluster estimates into pseudo-values that summarize each cluster's first-order contribution to $\widehat\tau$. Specifically, for each cluster $i\in\{1,\ldots,M\}$, define the jackknife pseudo-value
\[\widehat\tau_i = M\widehat\tau-(M-1)\widehat\tau_{(-i)},\]
which inflates the influence of cluster $i$ by contrasting the full-sample estimate with the leave-one-out estimate. By construction, $M^{-1}\sum_{i=1}^M\widehat\tau_i = \widehat\tau$, so these pseudo-values are cluster-level summaries whose average targets $\tau$ in large samples. Treating $\{\widehat\tau_i\}_{i=1}^M$ as $M$ independent cluster-level observations, the JEL for a candidate value $\tau$ maximizes a nonparametric multinomial likelihood subject to a single mean constraint,
\[\mathcal{L}(\tau)=\max_{\{p_i\}}\left\{\prod_{i=1}^M p_i:\;p_i\ge 0,\;\sum_{i=1}^M p_i=1,\;\sum_{i=1}^M p_i(\widehat\tau_i-\tau)=0\right\},\]
with unconstrained maximum $\mathcal{L}(\widehat\tau)$ attained at $p_i=1/M$. The JEL ratio statistic for testing $H_0\colon\tau=\tau_0$ is
\[R(\tau_0)=-2\log\left\{\frac{\mathcal{L}(\tau_0)}{\mathcal{L}(\widehat\tau)}\right\}.\]
The constrained maximizer has the closed form $p_i(\tau_0)=\{M(1+\lambda(\tau_0)(\widehat\tau_i-\tau_0))\}^{-1}$, where the scalar Lagrange multiplier $\lambda(\tau_0)$ solves
\[\sum_{i=1}^M \frac{\widehat\tau_i-\tau_0}{1+\lambda(\tau_0)(\widehat\tau_i-\tau_0)}=0,\]
and gives
\[R(\tau_0)=2\sum_{i=1}^M \log\{1+\lambda(\tau_0)(\widehat\tau_i-\tau_0)\}.\]
Under the conditions that clusters are independent, $M_1/M\to q\in(0,1)$, $\widehat\tau$ admits a nondegenerate first-order Hájek projection at the cluster level with finite second moment, the limiting variance of $\sqrt{M}(\widehat\tau-\tau)$ is positive, and the convex-hull feasibility condition holds so that $\{\widehat\tau_i-\tau_0\}_{i=1}^M$ contains both positive and negative values, a Wilks-type theorem for JEL gives $R(\tau_0)\overset{d}{\longrightarrow}\chi^2_1$ as $M\to\infty$, and the asymptotic level-$\alpha$ test rejects $H_0$ when $R(\tau_0)>\chi^2_{1,1-\alpha}$ \citep{peng2018jackknife}. This likelihood ratio test is free of any variance estimation step and remains valid under arbitrary within-cluster dependence, because the pseudo-values are formed by cluster deletion and the asymptotics are driven entirely by the $M$ independent cluster-level summaries. The cluster-level Wilks theorem, regularity conditions, and implementation details are provided in  \ref{app:jel} for completeness.

\begin{sidewaystable}
\centering
\caption{Summary of testing procedures for win statistics in CRTs.}
\label{tab:test_summary}
\setlength{\tabcolsep}{6pt}
\renewcommand{\arraystretch}{1.2}
\newcolumntype{L}{>{\raggedright\arraybackslash}X}
\begin{tabularx}{\linewidth}{@{}p{3.2cm} p{4.2cm} L p{4.8cm}@{}}
\toprule
Procedure &
Representative quantity &
Test statistic &
Reference distribution \\
\midrule

FCL Wald test & $s_{ij,kl}\in\{-1,0,1\}$ and \(S_i=\sum_{j=1}^{N_i}\sum_{k=1}^{M}\sum_{l=1}^{N_k}s_{ij,kl}\) &
\(Z_\tau=(\widehat\tau-\tau_0)/\widehat\sigma_\tau\), where \(\widehat\sigma_\tau\) is obtained from the cluster-score variance estimator in \eqref{eq:wald_cluster_score_var} &
Standard normal, or \(t\) with \(M-2\) degrees of freedom \\
\addlinespace[2pt]

ZJ Wald test &
\(\phi_{\mathrm{win}}(\bm Y_{ij},\bm Y_{kl})=\mathbb{I}\{\bm Y_{ij}\succ\bm Y_{kl}\}\), \(\phi_{\mathrm{loss}}(\bm Y_{ij},\bm Y_{kl})=\mathbb{I}\{\bm Y_{ij}\prec\bm Y_{kl}\}\), and \((U_{\mathrm{win}},U_{\mathrm{loss}})\) &
\(Z_\tau=(\widehat\tau-\tau_0)/\widehat\sigma_\tau\), where \(\widehat\sigma_\tau^2=\nabla g^\top \widehat\Sigma_\pi \nabla g\) based on the covariance decomposition in \eqref{zhang_var} &
Standard normal, or \(t\) with \(M-2\) degrees of freedom \\
\addlinespace[2pt]

Jackknife Wald test &
Leave-one-cluster estimators \(\widehat\tau_{(-i)}\) &
\(Z_{\tau,\mathrm{JK}}=(\widehat\tau-\tau_0)/\widehat\sigma_{\tau,\mathrm{JK}}\), where \(\widehat\sigma_{\tau,\mathrm{JK}}^{\,2}=\frac{M-1}{M}\sum_{i=1}^M\left(\widehat\tau_{(-i)}-\overline\tau_{(-)}\right)^2\) and \(\overline\tau_{(-)}=\frac{1}{M}\sum_{i=1}^M \widehat\tau_{(-i)}\) &
Standard normal, or \(t\) with \(M-2\) degrees of freedom \\
\addlinespace[2pt]
Permutation test & Observed test statistic recomputed over treatment-label permutations with exactly \(qM\) treated clusters &
For each permuted assignment, recompute \(\widehat W_D\), \(\log(\widehat W_R)\), \(\log(\widehat W_O)\), or \(\widehat{\mathrm{DOOR}}\), and obtain the two-sided permutation \(p\)-value &
Randomization distribution under the sharp null, evaluated by Monte Carlo sampling\\
\addlinespace[2pt]

Cluster FS test & Cluster scores \(S_i\) in Section \ref{sec:score} &
\[
Z_{W_D}=\frac{\sum_{i=1}^M (A_i-q)S_i}{\sqrt{\dfrac{q(1-q)M}{M-1}\sum_{i=1}^M S_i^2}}\]
&
Standard normal, or \(t\) with \(M-2\) degrees of freedom \\
\addlinespace[2pt]

JEL test & Leave-one-cluster pseudo-values \(\widehat\tau_i=M\widehat\tau-(M-1)\widehat\tau_{(-i)}\) &
\(R(\tau_0)=-2\log\{\mathcal{L}(\tau_0)/\mathcal{L}(\widehat\tau)\}\) &
Asymptotic \(\chi^2_1\) \\
\bottomrule
\end{tabularx}
\end{sidewaystable}

\section{Simulation study}\label{sec:sim}

We conducted a simulation study to compare the finite-sample operating characteristics of the testing procedures summarized in Table \ref{tab:test_summary}. For the simulation study, we consider a common two-component semi-competing risks endpoint. This setting provides a useful basis for comparing the procedures but is not intended to exhaust the outcome structures, comparison rules, censoring mechanisms, or cluster size configurations covered by the general framework. Each simulated dataset comprised \(M\) independent clusters. Treatment was assigned at the cluster level by complete randomization with allocation proportion \(q=0.5\) and exactly \(M/2\) treated clusters. For each dataset, we computed \(\widehat\tau\) for \(\tau\in\{W_D,\log(W_R),\log(W_O),\mathrm{DOOR}\}\) and applied each procedure to test \(H_0:\tau=\tau_0\). Across scenarios, we varied the number of clusters \(M\), the coefficient of variation of cluster size \(\mathrm{CV}(N_i)\), the magnitude of cluster-level heterogeneity, the dependence between the two component event times recorded for the same individual, and the target censoring level; the mean cluster size was fixed at \(\overline N=20\). The full list of settings is reported in Table \ref{tab:sim_scenarios}.

For \(V=2\) ordered component outcomes: component \(v=1\) was terminal (e.g., death) and component \(v=2\) was nonterminal (e.g., first hospitalization), which is only observed if it occurs before death and censoring. Cluster-level heterogeneity was induced through event-specific multiplicative frailties \(\gamma_{i1}\) and \(\gamma_{i2}\), independent across clusters,
with \(\gamma_{i1}\sim \mathrm{Gamma}(\alpha_1,\alpha_1)\), \(\gamma_{i2}\sim \mathrm{Gamma}(\alpha_2,\alpha_2)\),
where we use the shape-rate parameterization so that \(\mathbb{E}(\gamma_{iv})=1\) and \(\mathrm{Var}(\gamma_{iv})=1/\alpha_v\), \(v\in\{1,2\}\). Conditional on \((A_i,\gamma_{i1},\gamma_{i2})\), we generated latent event times \((T_{ij1},T_{ij2})\) from marginal proportional hazards models with Weibull baselines. Specifically, for component \(v\in\{1,2\}\) we defined the cause-specific hazard
\begin{equation}\label{eq:hazards_sim}
\lambda_{v}(t\mid A_i,\gamma_{iv})=\gamma_{iv}\,\lambda_{v0}(t)\exp(\theta_v A_i),
\qquad 
\lambda_{v0}(t)=\kappa_v\lambda_v t^{\kappa_v-1},
\end{equation}
so that the corresponding cumulative hazard and survival functions are
\[
\Lambda_{v}(t\mid A_i,\gamma_{iv})=\gamma_{iv}\lambda_v t^{\kappa_v}\exp(\theta_v A_i),
\qquad
S_{v}(t\mid A_i,\gamma_{iv})=\exp\!\left\{-\Lambda_{v}(t\mid A_i,\gamma_{iv})\right\}.
\]

Within-individual dependence between \((T_{ij1},T_{ij2})\) was introduced through a bivariate Gumbel copula linking the marginal survival functions. Let \(\mathcal{G}_{\eta}(\cdot,\cdot)\) denote the Gumbel copula with association parameter \(\eta\ge 1\),
\[
\mathcal{G}_{\eta}(u_1,u_2)
=\exp\!\left[-\left\{(-\log u_1)^{\eta}+(-\log u_2)^{\eta}\right\}^{1/\eta}\right],
\qquad u_1,u_2\in(0,1),
\]
and define the conditional joint survival function of \((T_{ij1},T_{ij2})\) given \((A_i,\gamma_{i1},\gamma_{i2})\) by
\begin{equation}\label{eq:joint_surv_gumbel}
\Pr(T_{ij1}>t_1,\,T_{ij2}>t_2\mid A_i,\gamma_{i1},\gamma_{i2}) = \mathcal{G}_{\eta}\!\left(S_{1}(t_1\mid A_i,\gamma_{i1}),\,S_{2}(t_2\mid A_i,\gamma_{i2})\right).
\end{equation}
The induced conditional joint distribution is
\[
F(t_1,t_2\mid A_i,\gamma_{i1},\gamma_{i2}) = 1-S_{1}(t_1\mid A_i,\gamma_{i1})-S_{2}(t_2\mid A_i,\gamma_{i2}) + \mathcal{G}_{\eta}\!\left(S_{1}(t_1\mid A_i,\gamma_{i1}),\,S_{2}(t_2\mid A_i,\gamma_{i2})\right).
\]
When \(\eta=1\), \(\mathcal{G}_{\eta}(u_1,u_2)=u_1u_2\) and \((T_{ij1},T_{ij2})\) are conditionally independent given
\((A_i,\gamma_{i1},\gamma_{i2})\), while larger \(\eta\) induces stronger positive dependence.

Cluster sizes \(N_i\) were generated from a shifted negative binomial distribution truncated below at 5 and calibrated to satisfy \(\mathbb{E}(N_i)=\overline N\) and the specified \(\mathrm{CV}(N_i)\). Each individual was subject to independent right censoring with \(C_{ij}=\min(C_{ij}^{\ast},\tau_c)\), where \(C_{ij}^{\ast}\sim\mathrm{Exp}(\xi)\), \(\tau_c\) is the administrative end time, and \(\xi\) was calibrated to achieve the target censoring rate and hence different values of \(\pi_{\mathrm{tie}}\). Observed outcomes were then constructed to reflect the semi-competing structure. For the terminal component, \(\widetilde T_{ij1}=\min(T_{ij1},C_{ij})\), and \(\Delta_{ij1}=\mathbb{I}(T_{ij1}\le C_{ij})\). For the nonterminal component, \(\widetilde T_{ij2}=\min(T_{ij2},T_{ij1},C_{ij})\), and \(\Delta_{ij2}=\mathbb{I}\{T_{ij2}<\min(T_{ij1},C_{ij})\}\), so that the nonterminal event is recorded only if it occurs before death and censoring. The observed outcome vector was \(\bm Y_{ij}=\{(\widetilde T_{ij1},\Delta_{ij1}),(\widetilde T_{ij2},\Delta_{ij2})\}\). Pairwise comparisons were induced by a fixed hierarchical rule \(\mathcal{W}\) on generic outcome vectors
\(\bm y=\{(\tilde t_1,\delta_1),(\tilde t_2,\delta_2)\}\) and \(\bm y'=\{(\tilde t_1',\delta_1'),(\tilde t_2',\delta_2')\}\), defined as
\[
\mathcal{W}(\bm y,\bm y')=\mathbb{I}\{(\delta_1,\delta_1')=(1,1),\ \tilde t_1\neq \tilde t_1'\}\,\operatorname{sign}(\tilde t_1-\tilde t_1')+ \mathbb{I}\{(\delta_1,\delta_1')\neq(1,1)\ \text{or}\ \tilde t_1=\tilde t_1'\}\, \mathbb{I}\{(\delta_2,\delta_2')=(1,1),\ \tilde t_2\neq \tilde t_2'\}\,\operatorname{sign}(\tilde t_2-\tilde t_2'),
\]
so that \(\bm y\succ \bm y'\) if \(\mathcal{W}(\bm y,\bm y')=1\), \(\bm y\prec \bm y'\) if \(\mathcal{W}(\bm y,\bm y')=-1\), and \(\bm y=\bm y'\) if \(\mathcal{W}(\bm y,\bm y')=0\). Type I error rate was assessed under the global null \(\theta_1=\theta_2=0\). Power was assessed under a concordant beneficial alternative with \(\theta_1=\log(0.65)\) and \(\theta_2=\log(0.50)\). Baseline Weibull parameters \((\kappa_v,\lambda_v)\) and \(\tau_c\) were fixed across scenarios. 

For each scenario in Table \ref{tab:sim_scenarios}, we generated \(2000\) independent Monte Carlo replicates under both the null and alternative configurations and conducted the testing procedures summarized in Table \ref{tab:test_summary} at the nominal \(\alpha=0.05\) level. Specifically, we conduct Wald test using the analytical variance estimators following the clustered rank sum in \eqref{eq:wald_cluster_score_var} (abbreviated by the first letters of the authors as FCL), bivariate clustered U-statistics in \eqref{zhang_var} (abbreviated as ZJ), and Wald test using the delete-one-cluster jackknife variance estimator in Section \ref{sec:jk_var},  randomization-based test through permutation test, and cluster FS test in Section \ref{sec:score}, and jackknife empirical likelihood ratio test (JEL) in Section \ref{sec:jel}. For the permutation test procedures, the reference distribution was generated by permuting the cluster treatment labels while preserving the randomization scheme with exactly \(M/2\) treated clusters. In each replicate, we used \(2000\) treatment-label permutations and applied a two-sided rejection criterion. We summarized testing performance by the empirical rejection probability across Monte Carlo replicates, which was interpreted as type I error rate under the null configuration and as power under the alternative configuration.

\begin{table}[!htbp]
\centering
\caption{
Simulation scenario design (with fixed \(\overline N=20\)). The 24 scenarios are generated by the factorial product \(M\times \mathrm{CV}(N_i)\times (\alpha_1,\alpha_2)\times \eta\), where \(M\in\{20,100\}\), \(\mathrm{CV}(N_i)\in\{0.3,0.5\}\), \((\alpha_1,\alpha_2)\in\{(2,2),(1,1)\}\) are frailty shape parameters for the terminal and non-terminal components with \(\mathrm{Var}(\gamma_{iv})=1/\alpha_v\), and \(\eta\in\{1,2,4\}\) is the within-individual dependence parameter (Gumbel copula). The table encodes the \(2^3=8\) combinations of the three two-level factors \(\{M,\mathrm{CV}(N_i),(\alpha_1,\alpha_2)\}\) using \(-/+\) coding: \(M(-)=20\), \(M(+)=100\); \(\mathrm{CV}(-)=0.3\), \(\mathrm{CV}(+)=0.5\); \((\alpha_1,\alpha_2)(-)=(2,2)\), \((\alpha_1,\alpha_2)(+)=(1,1)\). Each coded run is crossed with \(\eta\in\{1,2,4\}\), yielding \(8\times 3=24\) scenarios. Every scenario is evaluated under both configurations: null \((\theta_1,\theta_2)=(0,0)\) with \(\pi_{\mathrm{tie}}\in\{35\%,7\%\}\) and alternative \((\theta_1,\theta_2)=\{\log(0.65),\log(0.50)\}\) with \(\pi_{\mathrm{tie}}\in\{41\%,8\%\}\).
}
\label{tab:sim_scenarios}

\setlength{\tabcolsep}{10pt}
\renewcommand{\arraystretch}{1.1}

\begin{tabular}{@{}r c c c c@{}}
\toprule
Run & \(M\) & \(\mathrm{CV}(N_i)\) & \((\alpha_1,\alpha_2)\) & \(\eta\) \\
\midrule
1 & $-$ & $-$ & $-$ & \(1,2,4\) \\
2 & $+$ & $-$ & $-$ & \(1,2,4\) \\
3 & $-$ & $+$ & $-$ & \(1,2,4\) \\
4 & $+$ & $+$ & $-$ & \(1,2,4\) \\
5 & $-$ & $-$ & $+$ & \(1,2,4\) \\
6 & $+$ & $-$ & $+$ & \(1,2,4\) \\
7 & $-$ & $+$ & $+$ & \(1,2,4\) \\
8 & $+$ & $+$ & $+$ & \(1,2,4\) \\
\bottomrule
\end{tabular}
\end{table}

Figure \ref{fig:sim_typei_20} summarizes the empirical type I error rate under the null configuration \((\theta_1,\theta_2)=(0,0)\) across the 24 scenarios listed in Table \ref{tab:sim_scenarios}. Because only \(M=20\) clusters were considered in this setting, the Wald-type test were implemented using a \(t\) reference distribution with \(M-2=18\) degrees of freedom as a small-sample adjustment. Across all four win measures, the two randomization-based tests, namely the permutation test and the cluster FS test, returned the most stable methods and generally remained closest to the nominal level, although no single method was uniformly best in every scenario. While the clustered rank-sum Wald test (FCL) generally remained close to or below the nomial level, with average empirical type I error rate \(0.048\) for both \(W_D\) and DOOR, and more conservative for \(W_R\) and \(W_O\), with corresponding average type I error rate \(0.034\). This finding is consistent with previous comparative work for CRTs with semi-competing risks, where permutation-based inference was found to provide reliable small-sample type I error rate control \citep{li2022comparison}. The bivariate clustered U-statistic Wald test (ZJ) tended to show more noticeable upward size distortion, especially for \(W_D\) and DOOR, whereas the jackknife-studentized Wald test was typically intermediate between FCL and ZJ. The JEL test showed the largest size distortion overall and most often produced the highest empirical type I error rate across scenarios. The ordering of methods was fairly consistent across the four win measures, so that the main finite-sample differences are driven more by the inferential procedure than by the specific choice of win measure. In addition, Figure \ref{fig:sim_typei_20} does not show a strong or fully monotone effect of any single factor across all methods and estimands. Heavier censoring, corresponding to larger \(\pi_{\mathrm{tie}}\), appears to modestly reduce type I error rate for some procedures in some panels, but the pattern is not sufficiently uniform. Similarly, increasing cluster-size variability from \(\mathrm{CV}(N_i)=0.3\) to \(0.5\) does not systematically change the ordering of methods, although the less stable procedures, especially JEL and sometimes ZJ, tend to fluctuate more across these settings. The effect of the within-individual dependence parameter \(\eta\) is also comparatively modest. 

\begin{sidewaysfigure}
    \centering
    \includegraphics[width=0.8\linewidth]{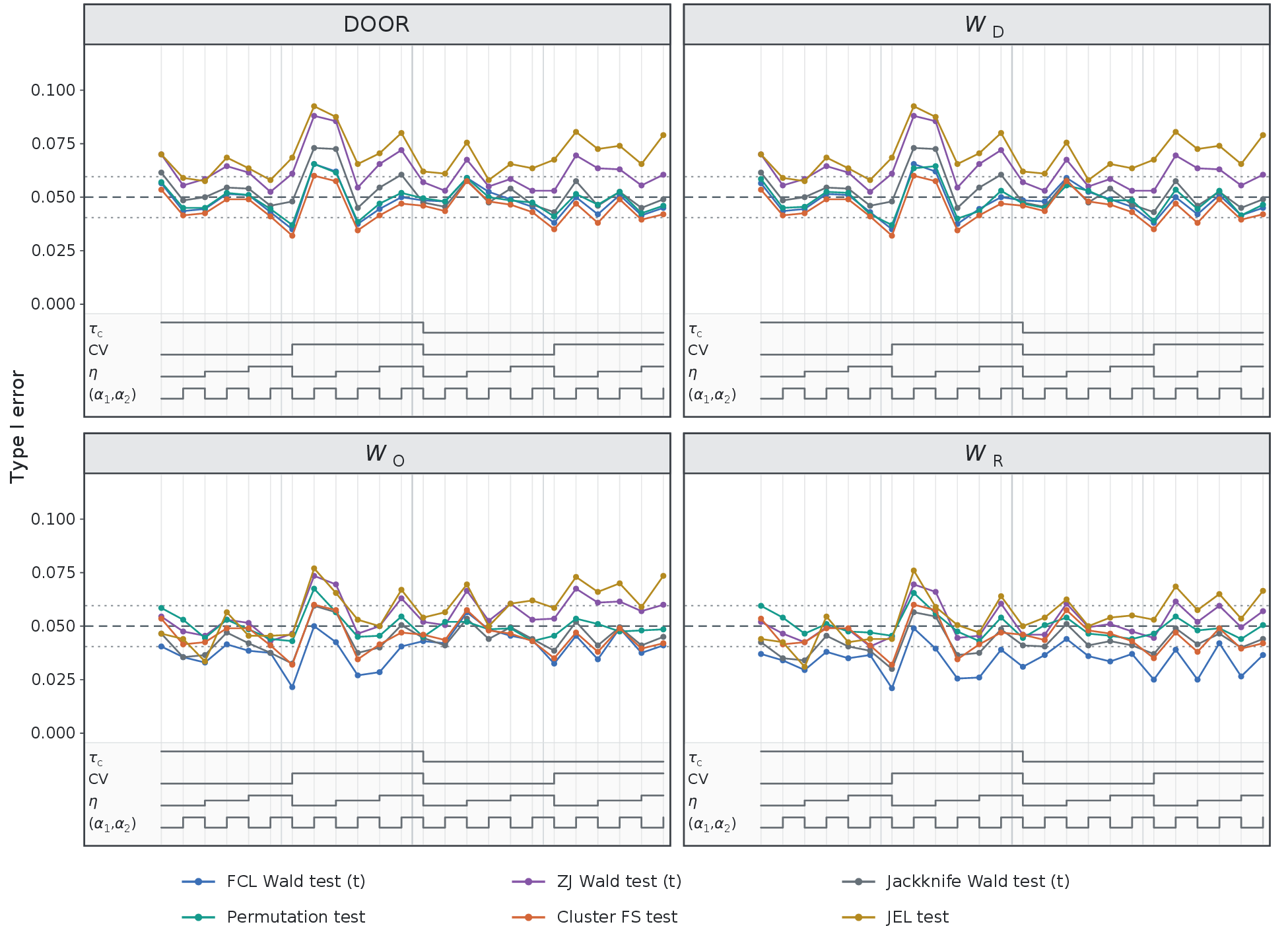}\\
    \caption{Empirical type I error rate for tests of win statistics in parallel-arm cluster-randomized trials under the global null \((\theta_1,\theta_2)=(0,0)\) across Monte Carlo 2,000 replicates. Panel (a) shows \(M=20\) clusters. Within each panel, results are shown for the four win statistics: \(W_D\), \(W_R\), \(W_O\), and \(\mathrm{DOOR}\). The procedure includes Wald t test with df\(=18\) using FCL \citep{fang2025sample}; Wald t test with df\(=18\) using ZJ \citep{zhang2021inference}; Wald t test with df\(=18\) delete-one-cluster jackknife standard errors; permutation test; cluster FS test \citep{finkelstein1999combining}; and the jackknife empirical likelihood (JEL) test. The horizontal dashed line marks the nominal two-sided level \(\alpha=0.05\). The two horizontal dotted lines indicate the Monte Carlo variance band \(0.05 \pm 1.96\sqrt{0.05(1-0.05)/2000}\). The annotation strip above the encodes the scenario factors by: \(\pi_{\text{tie}}\) (35\% versus \(7\%\)), \(\mathrm{CV}(N_i)\) (0.3 versus 0.5 ), within-individual dependence parameter \(\eta\) (1, 2, versus 4), and frailty shape parameters \((\alpha_1,\alpha_2)\) (\((1,1)\) versus (2,2) ).}
    \label{fig:sim_typei_20} 
\end{sidewaysfigure}

Figure \ref{fig:sim_power_20} summarizes the empirical power under the concordant beneficial alternative \((\theta_1,\theta_2)=\{\log(0.65),\log(0.50)\}\) across the 24 scenarios in Table \ref{tab:sim_scenarios}. Overall, power was moderate at this sample size, and the differences among tests were more pronounced than those observed for type I error rate. For the difference-type win measure \(W_D\) and DOOR, the highest average empirical power across scenarios was achieved by JEL and the ZJ Wald test, at \(0.271\) and \(0.269\), respectively, followed by the permutation test and the jackknife Wald test, both at \(0.243\), the FCL Wald test at \(0.239\), and the cluster FS test at \(0.230\). However, this ordering should be interpreted jointly with Figure \ref{fig:sim_typei_20}, because JEL and ZJ also showed the largest type I error rate inflation under the null, whereas FCL and the two randomization-based procedures provided a more favorable balance between power and type I error rate control. For the ratio-type win measure \(\log(W_R)\) and \(\log(W_O)\), the pattern was more distinct. The permutation test was the most powerful procedure across scenarios, with average power \(0.349\) for \(\log(W_R)\) and \(0.339\) for \(\log(W_O)\). The ZJ Wald test was generally the next strongest method, whereas the remaining procedures were less powerful, with FCL tending to be the least powerful for the ratio-type targets, averaging \(0.188\) for \(\log(W_R)\) and \(0.215\) for \(\log(W_O)\). Under the permutation test, ratio type statistics \(\widehat{W}_R\) and \(\widehat{W}_O\) achieve substantially higher empirical power than their difference type statistics \(\widehat{W}_D\) and \(\mathrm{DOOR}\), a pattern that is specific to the permutation test and does not emerge for the remaining procedures, where power differences across statistics are negligible. Thus, in this small-sample setting, the permutation test appears especially attractive for ratio-type win statistics, because it combines comparatively strong power with the most stable type I error rate performance in Figure \ref{fig:sim_typei_20}. In contrast, the lower power of the FCL Wald test for \(W_R\) and \(W_O\) was accompanied by more conservative type I error rates in Figure \ref{fig:sim_typei_20}. For \(W_D\) and DOOR, however, its type I error rate was close to the nominal level and its power was comparable to that of the permutation test. Across scenarios, the variation in power was driven more by the inferential procedure than by any single design factor, although several scenario effects were still visible. For example, under the permutation test, power for \(W_D\) ranged from \(0.121\) to \(0.385\) across the 24 scenarios, and power for \(\log(W_R)\) ranged from \(0.195\) to \(0.515\). The lowest power tended to occur in settings with stronger between-cluster heterogeneity \((\alpha_1,\alpha_2)=(2,2)\) and greater cluster-size variability \(\mathrm{CV}(N_i)=0.5\), whereas the highest power was typically observed when \((\alpha_1,\alpha_2)=(1,1)\) and \(\mathrm{CV}(N_i)=0.3\). Some fluctuations with censoring and the induced tie probability were also apparent, but no single factor produced a completely uniform pattern across all four win measures and all methods.

\begin{sidewaysfigure}
    \centering
    \includegraphics[width=0.8\linewidth]{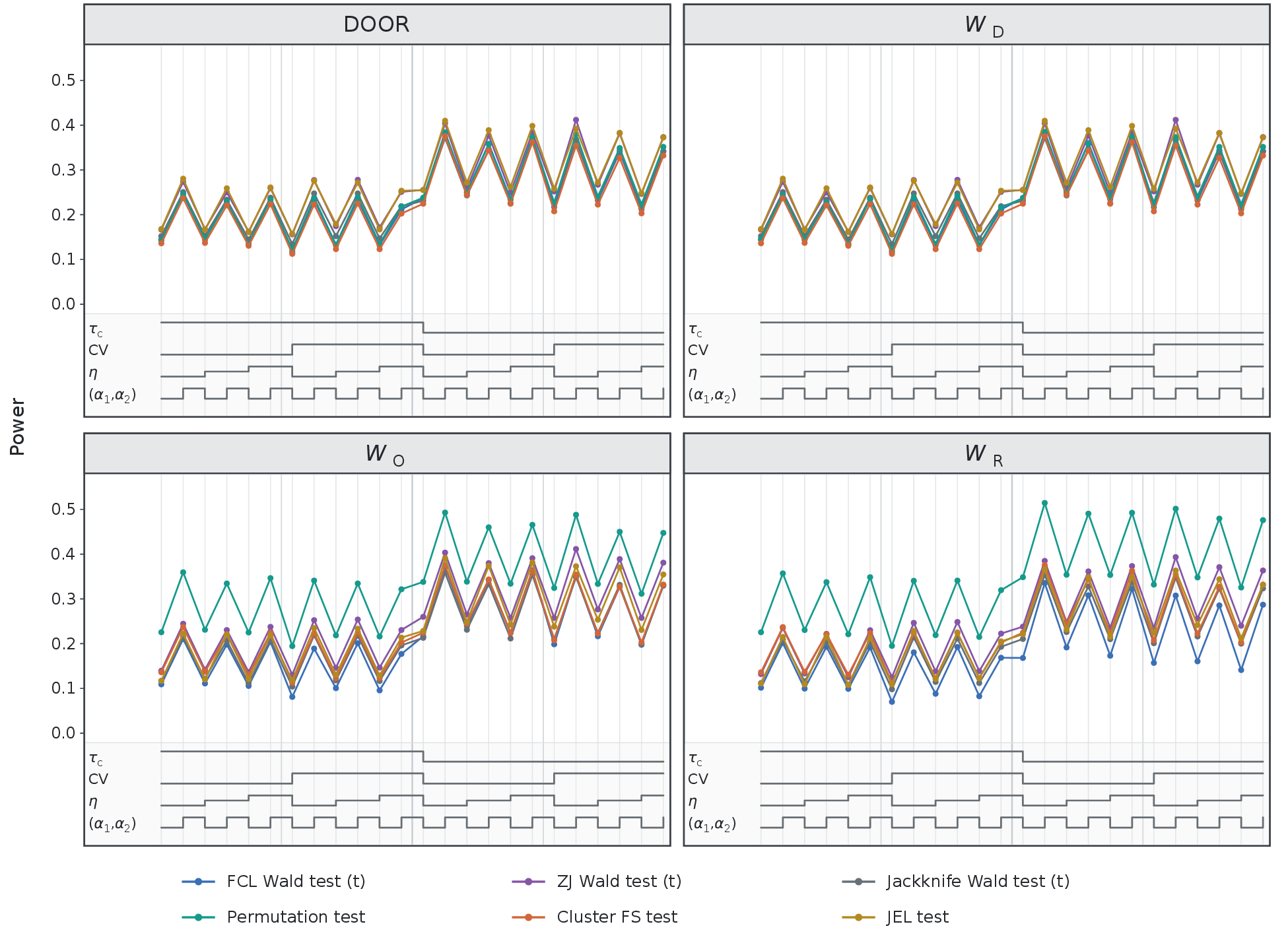}\\
    \caption{Empirical power for tests of win statistics in parallel-arm cluster-randomized trials with \(M=20\) under the concordant beneficial alternative \((\theta_1,\theta_2)=\{\log(0.65),\log(0.50)\}\) across 2{,}000 Monte Carlo replicates. Panel (a) shows \(M=20\) clusters. Within each panel, results are shown for the four win statistics: \(W_D\), \(\log(W_R)\), \(\log(W_O)\), and \(\mathrm{DOOR}\). The procedures include the Wald t test with df\(=18\) proposed using FCL \citep{fang2025sample}; the Wald t test with df\(=18\) using ZJ \citep{zhang2021inference}; a Wald t test with delete-one-cluster jackknife standard errors; the permutation test; the cluster FS test \citep{finkelstein1999combining}; and the jackknife empirical likelihood (JEL) test. Power is computed using a two-sided rejection criterion at the nominal level \(\alpha=0.05\). The annotation strip above the plot encodes the scenario factors by: \(\pi_{\text{tie}}\) (41\% versus 8\%), \(\mathrm{CV}(N_i)\) (0.3 versus 0.5), within-individual dependence parameter \(\eta\) (1, 2, versus 4), and frailty shape parameters \((\alpha_1,\alpha_2)\) (\((1,1)\) versus \((2,2)\)).}
    \label{fig:sim_power_20}
\end{sidewaysfigure}

Web Appendix Figure 2 presents the corresponding empirical type I error rate and power results for a larger number of clusters \(M=100\). For the Wald test, inference in this larger-sample setting was based on the standard normal reference distribution. Compared with the \(M=20\) results, the comparison patterns become much more uniform across procedures, with empirical type I error rate close to the nominal level and substantially improved power for all four win measures (in Web Appendix Figure 4). In particular, the separation among the six methods largely disappears when \(M=100\), and their power curves are nearly indistinguishable across the 24 scenarios. 

Finally, we also examined, for \(M=20\), the effect of using a standard normal reference distribution instead of the \(t\) reference distribution for the Wald-type procedures. The corresponding results are shown in Web Appendix Figure 1 and Web Appendix Figure 3. Compared to Figure 3, the Wald procedures exhibit more noticeable type I error rate inflation under the \(z\) reference distribution, while the randomization-based procedures are unchanged for both the permutation test and the analytical variance. This pattern also agrees with the prior simulation studies \citep{li2022comparison,chen2022finite}, where in CRT with survival outcomes, Wald tests (such those based on marginal Cox proportional hazards model, marginal subdistribution hazards model, and marginal multi-state Cox model) relying on large-sample variance approximations became increasingly anti-conservative as the number of clusters decreased, whereas permutation tests retained satisfactory control of the nominal level. The overall ordering of the Wald procedures is similar, with ZJ remaining the most upwardly shifted, FCL the most stable and conservative among the Wald tests, and the jackknife Wald test lying in between. This additional exploration emphasizes the critical need for small-sample degrees-of-freedom adjustment for studying win measures in CRTs with a small number of clusters. 

\section{An Illustrative Data example} \label{sec:real_dat}

We illustrate the proposed testing procedures using data from the Strategies to Reduce Injuries and Develop Confidence in Elders (STRIDE) trial, a large, pragmatic, parallel-arm, cluster-randomized trial conducted in 86 U.S. primary care practices across 10 healthcare systems. In STRIDE, practices were the unit of randomization and were assigned in a $1{:}1$ allocation to a multifactorial, patient-centered falls-prevention intervention delivered by trained falls-care managers or to enhanced usual care \citep{bhasin2018strategies,bhasin2020randomized}. The trial enrolled $5,451$ community-dwelling adults aged $\ge 70$ years at increased baseline risk of fall injury, including $n_1=2,802$ individuals in intervention practices and $n_0=2,649$ individuals in control practices, with median follow-up $2.35$ months. 
During follow-up, all-cause mortality was $4.26\%$ overall ($4.15\%$ intervention; $4.35\%$ control), while recurrent fall injuries occurred at an overall rate of $44.9$ per 100 person-years ($46.7$ per 100 person-years in intervention; $43.2$ per 100 person-years in control). The primary analysis estimated intervention effects using multi-state survival models with inference that accounted for within-practice correlation. In the original analysis, the intervention reduced the hazard of first individual-reported fall injury (hazard ratio $0.90$, 95\% confidence interval $(0.83,0.99)$, $p=0.004$). We reanalyze STRIDE using win statistics and the hypothesis testing procedures in Section \ref{sec:test}, focusing on a hierarchical composite endpoint with a semi-competing risks structure. Our composite is defined by two time-to-event components ordered by clinical priority, with death as the terminal, higher-priority event and first self-reported fall injury as the nonterminal, lower-priority event. This ordering reflects that death both represents the most severe outcome and truncates the subsequent observation of injuries, so it should dominate treatment comparisons whenever it occurs earlier. Aggregating pairwise comparisons yields counts of wins, losses, and ties, which in turn define the plug-in estimators of net benefit, win ratio, win odds, and DOOR as in Section \ref{sec:win_stat}. We then conduct inference for each target win measure \(\tau\in\{W_D,\log(W_R),\log(W_O),\mathrm{DOOR}\}\) by testing \(H_0:\tau=\tau_0\) at nominal two-sided level \(\alpha=0.05\), implementing the six strategies summarized in Table \ref{tab:test_summary}, where \(\tau_0=0\) for \(W_D, \log(W_R), \log(W_O)\), and \(0.5\) for DOOR.

\begin{table}[!ht]
\centering
\caption{STRIDE illustration results for the four win statistics under the six inferential strategies. Reported quantities include the point estimate (Est.), standard error (SE), and the two-sided \(p\)-value based on the standard normal reference distribution. Standard errors are not defined for the permutation test or the jackknife empirical likelihood (JEL) test.}
\label{tab:stride-test}

\setlength{\tabcolsep}{4pt}
\renewcommand{\arraystretch}{1.12}

\resizebox{\textwidth}{!}{%
\begin{tabular}{
  l
  S[table-format=1.3] S[table-format=1.3] S[table-format=1.3]
  S[table-format=1.3] S[table-format=1.3] S[table-format=1.3]
  S[table-format=1.3] S[table-format=1.3] S[table-format=1.3]
  S[table-format=1.3] S[table-format=1.3] S[table-format=1.3]
}
\toprule
& \multicolumn{3}{c}{$\widehat{W}_D$}
& \multicolumn{3}{c}{$\widehat{W}_R$}
& \multicolumn{3}{c}{$\widehat{W}_O$}
& \multicolumn{3}{c}{$\widehat{\mathrm{DOOR}}$} \\
\cmidrule(lr){2-4}\cmidrule(lr){5-7}\cmidrule(lr){8-10}\cmidrule(lr){11-13}
Method
& {Est.} & {SE} & {$p$-value}
& {Est.} & {SE} & {$p$-value}
& {Est.} & {SE} & {$p$-value}
& {Est.} & {SE} & {$p$-value} \\
\midrule

Wald test (clustered rank sum)
& 0.040 & 0.014 & 0.004
& 1.134 & 0.050 & 0.007
& 1.083 & 0.030 & 0.006
& 0.520 & 0.007 & 0.004 \\

Wald test (bivariate clustered U-statistics)
& 0.040 & 0.013 & 0.002
& 1.134 & 0.047 & 0.004
& 1.083 & 0.028 & 0.003
& 0.520 & 0.006 & 0.002 \\

Wald test (jackknife SE)
& 0.040 & 0.013 & 0.002
& 1.134 & 0.047 & 0.005
& 1.083 & 0.028 & 0.004
& 0.520 & 0.007 & 0.002 \\

Permutation test
& 0.040 & {--} & 0.006
& 1.134 & {--} & 0.004
& 1.083 & {--} & 0.004
& 0.520 & {--} & 0.006 \\

Cluster FS test
& 0.040 & 0.014 & 0.006
& 1.134 & 0.052 & 0.006
& 1.083 & 0.031 & 0.006
& 0.520 & 0.007 & 0.006 \\

Jackknife empirical likelihood (JEL)
& 0.040 & {--} & 0.003
& 1.134 & {--} & 0.005
& 1.083 & {--} & 0.004
& 0.520 & {--} & 0.003 \\

\bottomrule
\end{tabular}%
}

\vspace{2mm}
\footnotesize
The six procedures are the Wald test based on the clustered rank-sum representation (FCL), the Wald test based on bivariate clustered U-statistics (ZJ), the Wald test with delete-one-cluster jackknife standard errors, the permutation test, the cluster FS test, and the jackknife empirical likelihood (JEL) test.
\end{table}

Table \ref{tab:stride-test} summarizes the data analysis results for the four win statistics and the six testing strategies. The overall pattern is consistent across methods. For the win difference, the estimated effect was $\widehat{W}_D=0.040$, with Wald-type standard errors between 0.013 and 0.014 (FCL, ZJ, and jackknife variance) and corresponding two-sided $p$-values between 0.002 and 0.006. The permutation test and cluster permutation test both gave the same conclusion ($p=0.006$), while the jackknife empirical likelihood ratio test also rejected the null ($p=0.003$) at significance level $\alpha=0.05$. A similar pattern can be observed for the ratio-type win statistics. The estimated win ratio was $\widehat{W}_R=1.134$, corresponding to $14.4\%$ more wins than losses in favor of the intervention, and the estimated win odds were $\widehat{W}_O=1.083$. For \(\widehat{W}_R\), the reported \(p\)-value ranged from \(0.004\) to \(0.007\) across the six strategies, which still showed a statistically significant effect. For \(\widehat{W}_O\), the corresponding \(p\)-value ranged from \(0.003\) to \(0.006\). The DOOR statistic was $\widehat{\mathrm{DOOR}}=0.520$, and the evidence was very similar across all six strategies, with \(p\)-values between \(0.002\) and \(0.006\). All six strategies rejected the respective null hypothesis at \(\alpha=0.05\), and the direction of the estimated effects supports an intervention advantage under the hierarchy that prioritizes death over first fall injury, which consider the overall hierarchical contrast instead of a benefit for each component separately. The direction of the benefit is compatible with the original STRIDE finding, although two analysis address different treatment contrasts \citep{bhasin2018strategies}. The original multistate analysis accounted for death and characterize the intervention effect through a transition specific hazard ratio for the first fall injury among participants who remained alive and free of that injury. In contrast, win statistics compare survival first and use first fall injury only when the survival comparison does not determine a winner. Since death were less frequent that first fall injuries, and mortality was similar between the treatment group ($4.26\%$ for both arms), this provides a plausible explanation for the concordant direction of the findings. Although death retains the highest clinical priority, first fall injury may provide substantial discrimination among pairs not resolved by survival. Thus, and the component assigned the highest priority need not contribution most to the overall excess of wins over loss.  
The data analysis results based on the Wald test with a \(t\)-reference distribution and \(84\) degrees of freedom are reported in Web Appendix Table 1. Because the number of clusters is relatively large, these results are very similar to those based on the standard normal reference distribution, and the resulting conclusions are unchanged.


\section{Discussion} \label{sec:discussion}

In this paper, we compared six inferential procedures based on win statistics in parallel-arm CRTs with hierarchical composite endpoints. The procedures share the same prespecified pairwise comparison rule and the same empirical win, loss, and tie proportions, but differ in how independent cluster-level information is represented and calibrated for inference. Specifically, we considered the clustered rank-sum Wald test of Fang, Cao, and Li (FCL), the bivariate clustered $U$-statistic Wald test of Zhang and Jeong (ZJ), the delete-one-cluster jackknife Wald test, the permutation test, the clustered Finkelstein--Schoenfeld (FS) test, and the jackknife empirical likelihood (JEL) ratio test. This common formulation permits direct comparison of procedures that had previously been developed separately or only for selected win measures. The main findings and practical recommendations are summarized in Table \ref{tab:recommendations}.

The four win measures describe treatment benefit on different scales. The win difference summarizes the absolute excess of win over loss probability, and DOOR expresses the win probability with half credit for ties. The win ratio and win odds provide relative summaries, with the latter also assigning half of the tie probability to each group. As discussed by \citet{dong2023winstatistics} these summaries provide complementary interpretations of the same underlying pairwise comparisons. For the win difference and DOOR, the clustered rank-sum Wald test and the two randomization-based tests provided similar power and generally maintained type I error rate close to the nominal level at $M=20$. Among the Wald procedures, the FCL Wald test with a $t$-distribution and $M-2$ degrees of freedom had the most reliable type I error rate control. The bivariate clustered $U$-statistic Wald test and the jackknife empirical likelihood ratio test had higher rejection probabilities under the alternative, but also showed greater type I error rate inflation. The jackknife Wald test was generally intermediate in its type I error rate performance. These findings support the FCL Wald test as an analytic procedure for the win difference and DOOR in small CRTs, with the permutation test providing alternatives with comparable performance in the settings studied. For the win ratio and win odds, the cluster rank-sum Wald test was more conservative, with empirical type I error rate generally below the nominal level. This conservatism was accompanied by the lower power than the permutation test, which maintain type I error rate close to nominal and achieve the higher power in the settings studied. We therefore recommend the permutation for these ratio measure in small CRTs. For all four measures, differences in type I error rate and power were substantially smaller at $M=100$. With a larger number of clusters, we recommend the analytic Wald inference procedure, since it provides convenient yet reliable standard errors and confidence intervals without repeated computationally intensive permutations.

These findings also have implications for study design and sample size determination. \citet{fang2025sample} recently developed analytic power and sample size calculations for the clustered rank-sum Wald test that incorporate within-cluster dependence, cluster-size variability, and the comparison probabilities induced by the endpoint hierarchy and censoring. In their simulations, it was found that the predicted power can sometimes be lower than empirical power in the settings studied, and hence they offered a conservative approach to study design. This conservativeness concerns the variance approximation used for planning and is distinct from the type I error of the test. For the win difference and DOOR, the comparable power of the FCL Wald test and the randomization-based tests supports the use of FCL-based sample size calculations. For the win ratio and win odds, The lower power of the corresponding FCL Wald tests suggests that FCL-based calculations can provide a conservative analytic approach for sample size determination in CRTs with these ratio measures.

\begin{table}[ht]
\caption{Summary of finite-sample behavior and practical recommendations for testing procedures for win statistics in parallel-arm CRTs. FCL = Wald test with clustered rank sum \citep{fang2025sample}; ZJ = Wald test with bivariate clustered $U$-statistic \citep{zhang2021inference}; JEL = jackknife empirical likelihood ratio \citep{jing2009jackknife,peng2018jackknife}. For Wald procedures, a $t$ reference distribution with $M-2$ degrees of freedom is recommended when $M$ is small. For ratio-type win measures, inference should be conducted on $\log(\widehat W_R)$ and $\log(\widehat W_O)$ and then back-transformed for reporting.}
\label{tab:recommendations}
\centering
\resizebox{\textwidth}{!}{%
\begin{tabular}{p{3cm} p{3.3cm} p{3.3cm} p{3.3cm} p{3.5cm}}
\toprule
\textbf{Procedure} & \multicolumn{2}{c}{\textbf{Small $M$}} & \textbf{Large $M$} & \textbf{Recommendation} \\
\cmidrule(lr){2-3} \cmidrule(lr){4-4}
& \textbf{Type I error rate} & \textbf{Power} & \textbf{Type I error rate} &  \\
FCL Wald test \citep{fang2025sample} & Near nominal for $W_D$ and DOOR. Slightly conservative for $W_R$ and $W_O$ & Comparable to the permutation test for $W_D$ and DOOR, but lower for $W_R$ and $W_O$ & Near nominal & Preferred for $W_D$ and DOOR at small $M$. Convenient analytic option for all measures at large $M$ \\
\addlinespace
ZJ Wald test \citep{zhang2021inference} & Can exceed nominal, particularly for $W_D$ and DOOR & Higher rejection rate under the alternative should be interpreted alongside type I error inflation & Occasionally anti-conservative & Not preferred at small $M$, but may be used at large $M$ \\
\addlinespace
Jackknife Wald test & Generally intermediate between FCL and ZJ & Comparable to FCL for $W_D$ and DOOR & Near nominal & Useful when an analytic variance formula is impractical. Use cautiously at small $M$ \\
\addlinespace
Permutation test & Near nominal & Comparable to FCL for $W_D$ and DOOR. Higher than FCL for $W_R$ and $W_O$ & Near nominal &Preferred for $W_R$ and $W_O$ at small $M$, applicable at large $M$ if computational feasible \\
\addlinespace
Cluster FS test \citep{finkelstein1999combining} & Near nominal, slighly nominal & Comparable to FCL for $W_D$ and DOOR. Lower than the permutation test for $W_R$ and $W_O$ & Near nominal & Computationally efficient randomization-based alternative, particularly for $W_D$ and DOOR.  \\
\addlinespace
JEL test \citep{jing2009jackknife,peng2018jackknife} & Inflation is most pronounced for $W_D$ and DOOR & Higher rejection under the alternative & Near nominal & Use cautiously at small $M$, but has reasonable performance at large $M$ \\
\bottomrule
\end{tabular}%
}
\end{table}

Several limitations and directions for future research merit consideration. First, we focus on conventional win statistics, which assign equal weight to small and large resolved wins. This is a limitation when the scientific objective requires larger benefits to contribute more to the treatment comparison. Alternative comparison rules can incorporate prespecified clinical thresholds \citep{buyse2010generalized} or magnitude-based weights. For example, \citet{mou2026generalizing} recently extended the Finkelstein-Schoenfeld test for individually randomized trials by alternating comparisons across endpoints and revisiting unresolved pairs using progressively smaller thresholds. This approach allows clinically meaningful differences in lower-priority outcomes, such as time to hospitalization, to determine a win before small differences in survival are considered. Extending such methods to CRTs warrants further investigation. Second, pragmatic CRTs can include heterogeneous clusters, and in challenging cases, extreme cluster size imbalance can allow a few large clusters to dominate the all-pairs estimator without increasing the number of independent randomized units, potentially compromising the approximations underlying the Wald, cluster FS, and JEL tests. A permutation test calibrated to the actual assignment mechanism avoids these approximations for sharp-null testing, although extreme cluster size imbalance can be a source of power loss. In addition, our framework currently assumes non-informative cluster size; when outcome distributions depend on cluster size, individual-pair and cluster-pair estimands need not coincide, and the weighting scheme should reflect the intended population comparison.\cite{lee2026s,chen2026optimal} Examining the cluster size distribution and changes in estimates after omitting individual clusters can help assess sensitivity to influential clusters. Third, the present work also focuses on parallel-arm CRTs. Generalization to stepped-wedge CRTs would additionally require accounting for time effects and can build on existing work on win odds \citep{bard2026analysis}. Finally, our simulations considered a two-component semi-competing risks endpoint under selected cluster size and censoring configurations; more complex hierarchies, recurrent-event components, heavier censoring, and more extreme cluster size imbalance warrant further investigation.

To assist implementation, the \texttt{WinsCRT} R package is available on CRAN at \url{https://cran.r-project.org/web/packages/WinsCRT/index.html}; a tutorial is provided in Web Appendix 8.

\section*{Acknowledgements}
F.L., G.T., F.P.W. and P.J.H. are supported by the United States National Institutes of Health (NIH), National Heart, Lung, and Blood Institute (NHLBI, grant number 1R01HL178513). All statements in this report, including its findings and conclusions, are solely those of the authors and do not necessarily represent the views of the NIH. The STRIDE study was funded primarily by the Patient Centered Outcomes Research Institute (PCORI\textsuperscript{\textregistered}), with additional support from the National Institute on Aging (NIA) at NIH. Funding is provided and the award managed through a cooperative agreement (5U01AG048270) between the NIA and the Brigham and Women’s Hospital. The authors thank Professor Peter Peduzzi for help in accessing the STRIDE data.


\section*{Supplementary Material}

The supplementary material includes technical derivations and a tutorial for the R package \texttt{WinsCRT} referenced in the article.

\section*{Data Availability Statement}
An R package implementing our method is available at \url{https://cran.r-project.org/web/packages/WinsCRT/index.html}. The STRIDE data can be obtained via the National Institute on Aging (NIA) Aging Research Biobank at \url{https://agingresearchbiobank.nia.nih.gov/studies/stride/details}.

\clearpage
\newpage
\printbibliography

\newpage
\appendix

\setcounter{section}{0}
\renewcommand{\thesection}{A.\arabic{section}}
\renewcommand{\thesubsection}{\thesection.\arabic{subsection}}
\renewcommand{\thesubsubsection}{\thesubsection.\arabic{subsubsection}}

\section{Equivalence of participant-pair and cluster-pair weighting under non-informative cluster size} \label{app:equiv}

We establish that, under non-informative cluster size and the sampling and moment conditions stated below, the marginal win probability \(\pi_{\mathrm{win}}\) defined in Section \ref{sec:win_stat} equals both the cluster-pair-weighted and participant-pair-weighted win probabilities in \citet{lee2026s,chen2026optimal}. For two distinct clusters \(i\neq k\), define the total number of winning participant pairs as
\begin{align*}
H_{ik}=\sum_{u=1}^{N_i}\sum_{v=1}^{N_k} \mathbb{I}\{\bm{Y}_{iu}\succ\bm{Y}_{kv}\}.
\end{align*}
The cluster-pair-weighted and participant-pair-weighted probabilities
are, respectively,
\begin{align*}
\pi_{\mathrm{win}}^{\mathrm{C}}
&=\mathbb{E}\!\left[\frac{H_{ik}}{N_iN_k}\,\middle|\, A_i=1,A_k=0\right],\\
\pi_{\mathrm{win}}^{\mathrm{P}}
&=\frac{\mathbb{E}(H_{ik}\mid A_i=1,A_k=0)}{\mathbb{E}(N_iN_k\mid A_i=1,A_k=0)}.
\end{align*}
The first definition averages the within-cluster-pair win fraction under the unweighted cluster sampling law. The second weights cluster pairs in proportion to their numbers of participant pairs. Thus we can prove that, under non-informative cluster size,
\begin{align*}
\pi_{\mathrm{win}} &=\pi_{\mathrm{win}}^{\mathrm{C}} =\pi_{\mathrm{win}}^{\mathrm{P}}.
\end{align*}
The corresponding equalities for losses and ties then imply equality of the population win measures constructed from these probabilities.

We assume that \(\mathbb{P}(A_i=1,A_k=0)>0\), that cluster sizes are positive finite integers almost surely, and that \(0<\mathbb{E}(N_iN_k\mid A_i=1,A_k=0)<\infty\). Conditional on the indicated treatment assignments, the two cluster records are independent, and each follows the superpopulation distribution corresponding to its own treatment arm. Within-cluster
dependence is unrestricted. As in Section \ref{sec:win_stat}, \((i,j)\) and \((k,l)\) denote participants selected independently and uniformly conditional on the complete cluster records and treatment assignments. The indices \(u\) and \(v\) instead index all participants in the sums above. The comparison rule is fixed and measurable, with win, loss, and tie forming mutually exclusive and exhaustive possibilities. All probabilities and expectations concern this superpopulation sampling law only. Let \(F_a\) denote the marginal outcome distribution of a uniformly selected participant from a cluster in arm \(a\in\{0,1\}\). Non-informative cluster size means that, for every measurable outcome set \(B\) and every cluster size \(m\) with positive probability in that arm,
\begin{align*}
\mathbb{P}(\bm{Y}_{ij}\in B\mid A_i=a,N_i=m) =\mathbb{P}(\bm{Y}_{ij}\in B\mid A_i=a) =F_a(B).
\end{align*}
This condition concerns the full marginal outcome distribution entering the comparison rule. It does not require independence among outcomes within the same cluster. Since \(0\leq H_{ik}\leq N_iN_k\), the moment condition above ensures that all expectations defining the two win probabilities are finite.

\begin{proof}

Write \(C_i=(N_i,\bm{Y}_{i1},\ldots,\bm{Y}_{iN_i})\) for the record of cluster \(i\), and define \(C_k\) analogously. Conditional on the two records and assignments, each participant pair has selection
probability \(1/(N_iN_k)\). Therefore,
\begin{align*}
&\mathbb{P}\!\left(\bm{Y}_{ij}\succ\bm{Y}_{kl}\,\middle|\, C_i,C_k,A_i=1,A_k=0\right)\\
&\quad=\sum_{u=1}^{N_i}\sum_{v=1}^{N_k}\frac{1}{N_iN_k}\,\mathbb{I}\{\bm{Y}_{iu}\succ\bm{Y}_{kv}\}\\
&\quad=\frac{H_{ik}}{N_iN_k}.
\end{align*}
By the law of iterated expectation,
\begin{align*}
\pi_{\mathrm{win}}
&=\mathbb{E}\!\left[\mathbb{P}\!\left(\bm{Y}_{ij}\succ\bm{Y}_{kl}\,\middle|\, C_i,C_k,A_i=1,A_k=0\right)\,\middle|\, A_i=1,A_k=0\right]\\
&=\mathbb{E}\!\left[\frac{H_{ik}}{N_iN_k}\,\middle|\, A_i=1,A_k=0\right]\\
&=\pi_{\mathrm{win}}^{\mathrm{C}}.
\end{align*}
This equality follows from uniform participant selection and does not itself require non-informative cluster size.

For the participant-pair interpretation, sampling a participant from an arm-specific participant population weights each cluster record in proportion to its size. Independent sampling from the two
participant populations therefore weights a cluster pair in proportion
to \(N_iN_k\). Under conditional between-cluster independence,
\begin{align*}
\mathbb{E}(N_iN_k\mid A_i=1,A_k=0)=\mathbb{E}(N_i\mid A_i=1)\, \mathbb{E}(N_k\mid A_k=0).
\end{align*}
Thus, the normalized cluster-pair weight is \(N_iN_k/\mathbb{E}(N_iN_k\mid A_i=1,A_k=0)\). Averaging the conditional win fraction under this reweighted sampling law gives
\begin{align*}
\mathbb{E}\!\left[\frac{N_iN_k}{\mathbb{E}(N_iN_k\mid A_i=1,A_k=0)}\frac{H_{ik}}{N_iN_k}\,\middle|\, A_i=1,A_k=0\right]
&=\frac{\mathbb{E}(H_{ik}\mid A_i=1,A_k=0)}{\mathbb{E}(N_iN_k\mid A_i=1,A_k=0)}\\
&=\pi_{\mathrm{win}}^{\mathrm{P}}.
\end{align*}
The two formulations correspond to different weighting schemes. It remains to show that they yield the same probability under non-informative cluster size.

For any measurable sets \(B_1,B_0\) and any size pair \((m,r)\) with positive conditional probability, between-cluster independence and non-informative cluster size imply
\begin{align*}
&\mathbb{P}\!\left(\bm{Y}_{ij}\in B_1,\bm{Y}_{kl}\in B_0\,\middle|\, N_i=m,N_k=r,A_i=1,A_k=0\right)\\
&\quad=\mathbb{P}(\bm{Y}_{ij}\in B_1\mid N_i=m,A_i=1)\,\mathbb{P}(\bm{Y}_{kl}\in B_0\mid N_k=r,A_k=0)\\
&\quad=F_1(B_1)F_0(B_0).
\end{align*}
The conditional joint distribution of the selected outcomes is therefore \(F_1\otimes F_0\), so
\begin{align*}
&\mathbb{P}\!\left(\bm{Y}_{ij}\succ\bm{Y}_{kl}\,\middle|\, N_i=m,N_k=r,A_i=1,A_k=0\right)\\
&\quad=\int\!\!\int\mathbb{I}\{\bm{y}\succ\bm{z}\}\,F_1(d\bm{y})F_0(d\bm{z}).
\end{align*}
Because this integral does not depend on \(m\) or \(r\), averaging
over the cluster sizes yields
\begin{align*}
\pi_{\mathrm{win}}
&=\mathbb{E}\!\left[\mathbb{P}\!\left(\bm{Y}_{ij}\succ\bm{Y}_{kl}\,\middle|\, N_i,N_k,A_i=1,A_k=0\right)\,\middle|\, A_i=1,A_k=0\right]\\
&=\int\!\!\int\mathbb{I}\{\bm{y}\succ\bm{z}\}\,F_1(d\bm{y})F_0(d\bm{z}).
\end{align*}
In particular, the conditional win probability given the cluster sizes equals \(\pi_{\mathrm{win}}\) almost surely. Combining this identity with the conditional probability given the complete records and applying iterated expectation gives
\begin{align*}
&\mathbb{E}\!\left[\frac{H_{ik}}{N_iN_k}\,\middle|\, N_i,N_k,A_i=1,A_k=0\right]\\
&\quad=\mathbb{E}\!\left[\mathbb{P}\!\left(\bm{Y}_{ij}\succ\bm{Y}_{kl}\,\middle|\, C_i,C_k,A_i=1,A_k=0\right)\,\middle|\, N_i,N_k,A_i=1,A_k=0\right]\\
&\quad=\mathbb{P}\!\left(\bm{Y}_{ij}\succ\bm{Y}_{kl}\,\middle|\, N_i,N_k,A_i=1,A_k=0\right)\\
&\quad=\pi_{\mathrm{win}}.
\end{align*}
Since \(N_iN_k\) is fixed under conditioning on the two sizes,
\begin{align*}
\mathbb{E}(H_{ik}\mid N_i,N_k,A_i=1,A_k=0)
&=N_iN_k\,\mathbb{E}\!\left[\frac{H_{ik}}{N_iN_k}\,\middle|\, N_i,N_k,A_i=1,A_k=0\right]\\
&=N_iN_k\,\pi_{\mathrm{win}}.
\end{align*}
Taking expectation over the cluster sizes gives
\begin{align*}
\mathbb{E}(H_{ik}\mid A_i=1,A_k=0)
&=\mathbb{E}\!\left[\mathbb{E}(H_{ik}\mid N_i,N_k,A_i=1,A_k=0)\,\middle|\, A_i=1,A_k=0\right]\\
&=\mathbb{E}(N_iN_k\,\pi_{\mathrm{win}}\mid A_i=1,A_k=0)\\
&=\pi_{\mathrm{win}}\,\mathbb{E}(N_iN_k\mid A_i=1,A_k=0).
\end{align*}
Dividing by the finite positive expected number of participant pairs
therefore establishes
\begin{align*}
\pi_{\mathrm{win}}^{\mathrm{P}}
&=\frac{\mathbb{E}(H_{ik}\mid A_i=1,A_k=0)}{\mathbb{E}(N_iN_k\mid A_i=1,A_k=0)}\\
&=\frac{\pi_{\mathrm{win}}\,\mathbb{E}(N_iN_k\mid A_i=1,A_k=0)}{\mathbb{E}(N_iN_k\mid A_i=1,A_k=0)}\\
&=\pi_{\mathrm{win}}=\pi_{\mathrm{win}}^{\mathrm{C}}.
\end{align*}
The factorization relies on the constant conditional mean of \(H_{ik}/(N_iN_k)\) given the cluster sizes. It does not require independence of this realized win fraction and the sizes.

For losses and ties, define
\begin{align*}
H_{ik}^{\mathrm{loss}}
&=\sum_{u=1}^{N_i}\sum_{v=1}^{N_k}\mathbb{I}\{\bm{Y}_{iu}\prec\bm{Y}_{kv}\},\\
H_{ik}^{\mathrm{tie}}
&=\sum_{u=1}^{N_i}\sum_{v=1}^{N_k}\mathbb{I}\{\bm{Y}_{iu}=\bm{Y}_{kv}\},
\end{align*}
where equality denotes a tie under the comparison rule. The preceding argument applies to either indicator without modification. Uniform selection gives the corresponding conditional fraction, whose conditional mean is constant across cluster sizes under non-informative cluster size. Thus, for \(s\in\{\mathrm{loss},\mathrm{tie}\}\),
\begin{align*}
\pi_s^{\mathrm{C}}
&=\mathbb{E}\!\left[\frac{H_{ik}^{s}}{N_iN_k}\,\middle|\, A_i=1,A_k=0\right]=\pi_s,\\
\pi_s^{\mathrm{P}}
&=\frac{\mathbb{E}(H_{ik}^{s}\mid A_i=1,A_k=0)}{\mathbb{E}(N_iN_k\mid A_i=1,A_k=0)}=\pi_s.
\end{align*}
As a result, the win difference, win ratio, win odds, and DOOR constructed from these probabilities agree under the two weighting schemes wherever they are defined; the win ratio requires \(\pi_{\mathrm{loss}}>0\), and the win odds requires \(\pi_{\mathrm{loss}}+\tfrac12\pi_{\mathrm{tie}}>0\). This proves the claimed population-level equivalence.

\end{proof}

\section{Regularity conditions and proofs for the clustered rank-sum Wald test (FCL)}
\label{app:wald}

The rank-sum and cross-arm representations in Section \ref{sec:wald_test} satisfy
\[
\widehat W_D=\frac{1}{n_1n_0}\sum_{i:A_i=1}S_i,
\qquad \sum_{i=1}^M S_i=0,
\]
and $S_i$ contains comparisons with other clusters. We first derive the sampling limit from projections onto independent cluster records and then give sufficient conditions for the pooled score variance estimator used by FCL.

Write $\mathcal C_i=(N_i,\bm Y_{i1},\ldots,\bm Y_{iN_i})$, $M_a=\sum_i\mathbb I(A_i=a)$, $q_M=M_1/M$, $\overline N_a=n_a/M_a$, and $\mu_a=\mathbb E(N_i\mid A_i=a)$. All sampling limits in this section, \ref{app:wald_u}, and \ref{app:jel} condition on the allocation vector and concern superpopulation sampling of cluster records. The finite-population randomization argument is given separately in \ref{app:score}.

\begin{enumerate}
\item[(C1)] Conditional on the allocation vector, cluster records are mutually independent and identically distributed within each treatment arm. Their arm-specific laws are fixed as $M$ increases. Dependence within each cluster is unrestricted.
\item[(C2)] $q_M\to q\in(0,1)$, so $M_a\to\infty$ for both arms.
\item[(C3)] Cluster sizes are positive integers, $\mathbb E(N_i^{2+\delta}\mid A_i=a)<\infty$ for some $\delta>0$, and cluster size is non-informative as introduced in \ref{app:equiv}. The signed comparison kernel is fixed, measurable, antisymmetric, and bounded by one in absolute value. The limiting variance $V_{W_D}$ defined below is strictly positive.
\item[(C4)] For inference on the ratio scales, the population probabilities are interior, with \(1-\pi_{\mathrm{tie}}>0\), \(\left|\frac{W_D}{1-\pi_{\mathrm{tie}}}\right|<1\), and \(|W_D|<1\)
\end{enumerate}

Let $F_a$ be the marginal outcome law defined in \ref{app:equiv}. For a generic treated outcome $\bm y$ and control outcome $\bm z$, define
\[
\varphi_1^D(\bm y)=\int\mathcal W(\bm y,\bm z)\,F_0(d\bm z)-W_D,
\qquad
\varphi_0^D(\bm z)=\int\mathcal W(\bm y,\bm z)\,F_1(d\bm y)-W_D,
\]
and
\[
G_{1i}^D=\sum_{j=1}^{N_i}\varphi_1^D(\bm Y_{ij}),\quad A_i=1,
\qquad
G_{0k}^D=\sum_{l=1}^{N_k}\varphi_0^D(\bm Y_{kl}),\quad A_k=0.
\]
These are mean zero functions of individual cluster records. In particular, the control projection uses the treated versus control orientation of $\mathcal W$ and therefore enters the linearization with a plus sign. Put $\nu_a^2=\operatorname{Var}(G_{ai}^D\mid A_i=a)$ and
\[
V_{W_D}=\frac{\nu_1^2}{q\mu_1^2}+\frac{\nu_0^2}{(1-q)\mu_0^2},
\qquad
v_{W_D,M}=\frac{\nu_1^2}{M_1\mu_1^2}+\frac{\nu_0^2}{M_0\mu_0^2}.
\]
Here $V_{W_D}$ is the variance of the $\sqrt M$ limit, whereas $v_{W_D,M}$ is the first order ordinary sampling variance. They satisfy $Mv_{W_D,M}\to V_{W_D}$.

\begin{theorem}
\label{thm:wd_clt_app}
Under (C1)--(C3),
\[
\widehat W_D-W_D=\frac{1}{M_1\mu_1}\sum_{i:A_i=1}G_{1i}^D +\frac{1}{M_0\mu_0}\sum_{k:A_k=0}G_{0k}^D +o_p(M^{-1/2}),
\]
and thus
\[
\sqrt M(\widehat W_D-W_D)\xrightarrow{d}N(0,V_{W_D}).
\]
\end{theorem}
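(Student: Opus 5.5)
The plan is to write $\widehat W_D$ as a ratio of a two-sample clustered $U$-statistic (clusters as units) to the product of arm-specific mean cluster sizes. After that we apply a Hoeffding decomposition at the cluster level and finish with a Slutsky/delta-method argument.

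Define the cluster-pair kernel
\[
h(\mathcal C_i,\mathcal C_k)=\sum_{j=1}^{N_i}\sum_{l=1}^{N_k}\mathcal W(\bm Y_{ij},\bm Y_{kl}),
\qquad
U_M=\frac{1}{M_1M_0}\sum_{i:A_i=1}\sum_{k:A_k=0}h(\mathcal C_i,\mathcal C_k).
\]
Then $\widehat W_D=U_M/(\overline N_1\overline N_0)$. Under (C1) and non-informative cluster size (\ref{app:equiv}), conditioning on $(N_i,N_k)$ gives
\[
\mathbb E\{h\mid A_i=1,A_k=0\}=\mu_1\mu_0W_D.
\]
The first-order projections are computed as follows. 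Fix $\mathcal C_i$ in arm 1 and integrate over an independent control cluster. Using $\mathbb E(\sum_l g(\bm Y_{kl}))=\mu_0\int g\,dF_0$, which again relies on non-informative size, we get
\[
\mathbb E\{h\mid\mathcal C_i\}=\mu_0\sum_j\int\mathcal W(\bm Y_{ij},\bm z)F_0(d\bm z)=\mu_0\bigl(G_{1i}^D+N_iW_D\bigr).
\]
Symmetrically, $\mathbb E\{h\mid\mathcal C_k\}=\mu_1(G_{0k}^D+N_kW_D)$.

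The standard two-sample $U$-statistic decomposition then gives
\[
U_M-\mu_1\mu_0W_D=\frac{1}{M_1}\sum_{i:A_i=1}\bigl\{\mu_0(G_{1i}^D+N_iW_D)-\mu_1\mu_0W_D\bigr\}+\frac{1}{M_0}\sum_{k:A_k=0}\bigl\{\mu_1(G_{0k}^D+N_kW_D)-\mu_1\mu_0W_D\bigr\}+R_M .
\]
The degenerate remainder $R_M$ has mean zero and its cross terms are uncorrelated. Hence $\mathbb E R_M^2=O(\mathbb E h^2/(M_1M_0))$, and since $|h|\le N_iN_k$ with $\mathbb E N^2<\infty$ by (C3), we get $R_M=O_p(M^{-1})=o_p(M^{-1/2})$.

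Next, linearize the ratio. Write $\overline N_1=\mu_1+(\overline N_1-\mu_1)$ and similarly for $\overline N_0$; both deviations are $O_p(M^{-1/2})$ by the CLT. A first-order expansion of $U_M/(\overline N_1\overline N_0)$ around $(\mu_1\mu_0W_D,\mu_1,\mu_0)$ contributes
\[
-W_D\Bigl\{\frac{\overline N_1-\mu_1}{\mu_1}+\frac{\overline N_0-\mu_0}{\mu_0}\Bigr\}.
\]
This exactly cancels the $N_iW_D$ and $N_kW_D$ pieces of the projection, because $\mu_0 M_1^{-1}\sum(N_i-\mu_1)W_D/(\mu_1\mu_0)=W_D(\overline N_1-\mu_1)/\mu_1$. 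What remains is the stated representation. All products of two $O_p(M^{-1/2})$ deviations are absorbed into $o_p(M^{-1/2})$, which uses consistency of $U_M$ and $\overline N_a$ and that $\mu_a>0$.

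For the CLT, the leading term is a sum of two independent sums of i.i.d. mean-zero cluster variables. Their variances are $\nu_1^2/(M_1\mu_1^2)$ and $\nu_0^2/(M_0\mu_0^2)$, and $\nu_a^2<\infty$ because $|G_{ai}^D|\le 2N_i$. Lindeberg holds trivially in the i.i.d. case, or via Lyapunov with the $2+\delta$ moment in (C3). Multiplying by $\sqrt M$ and using (C2) gives the variance $V_{W_D}>0$, and Slutsky yields the limit.

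The main obstacle is verifying the projection identity. It must hold at the cluster level with random $N_i$ and unrestricted within-cluster dependence, so the key step is that $\mathbb E\sum_l\mathcal W(\bm y,\bm Y_{kl})=\mu_0\int\mathcal W(\bm y,\bm z)F_0(d\bm z)$. I would establish this by conditioning on $N_k=r$. Under non-informative size, each $\bm Y_{kl}$ has marginal law $F_0$ given $N_k=r$, assuming exchangeability of individual labels within a cluster, or else interpreting $F_0$ as the average over positions, which is all that is needed for linearity. Once this identity holds, the rest is the routine bookkeeping of the $U$-statistic remainder and the ratio expansion described above.
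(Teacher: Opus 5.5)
Your proof is correct. It differs from the paper's only in how the random normalization $\overline N_1\overline N_0$ is handled.

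The paper centers the cluster-pair kernel at a size-dependent value, $\widetilde H_{ik}^D=H_{ik}^D-W_DN_iN_k$. It then uses the exact identity $\widehat W_D-W_D=(M_1M_0)^{-1}\sum_{i:A_i=1}\sum_{k:A_k=0}\widetilde H_{ik}^D/(\overline N_1\overline N_0)$. The projections of $\widetilde H_{ik}^D$ are already $\mu_0G_{1i}^D$ and $\mu_1G_{0k}^D$, so no cluster-size terms ever appear. The denominator is then handled with $\overline N_a\to_p\mu_a$ and Slutsky alone.

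You center at the constant $\mu_1\mu_0W_D$, so your projections carry extra pieces $\mu_0W_D(N_i-\mu_1)$ and $\mu_1W_D(N_k-\mu_0)$. You cancel these against the first-order expansion of $1/(\overline N_1\overline N_0)$, and the cancellation does go through. The cost is a joint delta-method step requiring $\sqrt M$-rates for $\overline N_a$, which the second moments in (C3) supply. Your degenerate remainder differs from the paper's $D_{ik}^D$-based one only by $W_D(\overline N_1-\mu_1)(\overline N_0-\mu_0)=O_p(M^{-1})$, so nothing is lost.

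The paper's route is shorter. Yours makes explicit that the self-normalization by $\overline N_1\overline N_0$ is what removes cluster-size fluctuation from the influence function; normalizing by $\mu_1\mu_0$ would leave it in whenever $W_D\neq 0$.

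Your handling of the ``main obstacle'' through the position-averaged marginal law is exactly the paper's definition of non-informative cluster size via a uniformly selected participant. No within-cluster exchangeability assumption is needed.
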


\begin{proof}
For a treated cluster $i$ and a control cluster $k$, define
\[
H_{ik}^D=\sum_{j=1}^{N_i}\sum_{l=1}^{N_k}\mathcal W(\bm Y_{ij},\bm Y_{kl}),
\qquad
\widetilde H_{ik}^D=H_{ik}^D-W_DN_iN_k.
\]
Non-informative cluster size and between-cluster independence imply
\[
\mathbb E(\widetilde H_{ik}^D\mid\mathcal C_i)=\mu_0G_{1i}^D,
\qquad
\mathbb E(\widetilde H_{ik}^D\mid\mathcal C_k)=\mu_1G_{0k}^D,
\qquad \mathbb E(\widetilde H_{ik}^D)=0.
\]
Define the degenerate kernel
\[
D_{ik}^D=\widetilde H_{ik}^D-\mu_0G_{1i}^D-\mu_1G_{0k}^D.
\]
It has conditional mean zero given either cluster. Writing $U_D=(M_1M_0)^{-1}\sum_{i:A_i=1}\sum_{k:A_k=0}H_{ik}^D$, the following decomposition is exact,
\[
U_D-\overline N_1\overline N_0W_D=\frac{\mu_0}{M_1}\sum_{i:A_i=1}G_{1i}^D+\frac{\mu_1}{M_0}\sum_{k:A_k=0}G_{0k}^D+R_{D,M},
\quad
R_{D,M}=\frac{1}{M_1M_0}\sum_{i:A_i=1}\sum_{k:A_k=0}D_{ik}^D.
\]
For two different cluster pairs, the corresponding degenerate terms are uncorrelated. This follows from independence for disjoint pairs and from conditioning on the shared cluster otherwise. Moreover, $|\widetilde H_{ik}^D|\leq 2N_iN_k$, so (C3) gives $\mathbb E\{(D_{ik}^D)^2\}<\infty$. Therefore,
\[
\mathbb E(R_{D,M}^2)=\frac{\mathbb E\{(D_{ik}^D)^2\}}{M_1M_0}=O(M^{-2}),
\]
and $R_{D,M}=O_p(M^{-1})$. The law of large numbers gives $\overline N_a\to_p\mu_a>0$. Divide the exact decomposition by $\overline N_1\overline N_0$ and use $\widehat W_D=U_D/(\overline N_1\overline N_0)$. Replacing the convergent coefficients by $1/\mu_a$ contributes $o_p(M^{-1/2})$, which gives the linearization in main paper. Finally, the two sums consist of independent mean zero cluster contributions with finite second moments. Their central limit theorems, $M_1/M\to q$, and Slutsky's theorem give the asserted normal limit and $V_{W_D}$.
\end{proof}

Theorem \ref{thm:wd_clt_app} shows the asymptotic properties for the cluster rank-sum estimator. For the FCL variance estimator, we additionally suppose that $F_1=F_0=F$ and $\mu_1=\mu_0=\mu$, which are sufficient conditions under the null $W_D=0$. Define
\[
r(\bm y)=\int\mathcal W(\bm y,\bm z)F(d\bm z),\qquad
R_i=\sum_{j=1}^{N_i}r(\bm Y_{ij}),\qquad n=n_1+n_0.
\]
Antisymmetry gives $W_D=0$, $G_{1i}^D=R_i$, and $G_{0i}^D=-R_i$. Conditional on $\mathcal C_i$, comparisons with different partner clusters are independent, have conditional mean $\mu R_i$, and have conditional second moments bounded by a constant times $N_i^2$. The within-cluster contribution to $S_i$ cancels. Averaging the resulting conditional mean-square bounds and using $n/M\to_p\mu$ yields
\[
\frac{1}{M_a}\sum_{i:A_i=a}\left(\frac{S_i}{n}-R_i\right)^2=o_p(1),\qquad a\in\{0,1\}.
\]
Thus, with
\[
\overline S_a=\frac{1}{M_a}\sum_{i:A_i=a}S_i,\qquad s_{S,a}^2=\frac{1}{M_a-1}\sum_{i:A_i=a}(S_i-\overline S_a)^2,
\]
we have $s_{S,a}^2/n^2\to_p\nu_a^2$. Thus the ordinary FCL variance estimator
\[
\widehat v_{W_D,M}^{\mathrm{FCL}} =\left\{\frac{Mq_M(1-q_M)}{n_1n_0}\right\}^{\!2} \left(\frac{s_{S,1}^2}{M_1}+\frac{s_{S,0}^2}{M_0}\right)
\]
satisfies $M\widehat v_{W_D,M}^{\mathrm{FCL}}\to_p V_{W_D}$ under these additional conditions. 

For the other win measures, let
\[
g_R(w,t)=2\,\operatorname{atanh}\{w/(1-t)\},\qquad
g_O(w)=2\,\operatorname{atanh}(w),\qquad g_D(w)=(1+w)/2.
\]
Write $\bm\pi=(\pi_{\mathrm{win}},\pi_{\mathrm{loss}})^\top$ and
\[
B=\begin{pmatrix}1&-1\\-1&-1\end{pmatrix},\qquad
V_{D,t}=BV_\pi B^\top,
\]
where $V_\pi$ is defined in Theorem \ref{thm:u_joint}. It also establishes the joint result
\[
\sqrt M\begin{pmatrix}\widehat W_D-W_D\\\widehat\pi_{\mathrm{tie}}-\pi_{\mathrm{tie}}\end{pmatrix}
\xrightarrow{d}N(\bm0,V_{D,t}).
\]
For $d=(1-t)^2-w^2$, the derivatives are
\[
a(w,t)=\frac{2(1-t)}{d},\qquad b(w,t)=\frac{2w}{d},\qquad
g_O'(w)=\frac{2}{1-w^2},\qquad g_D'(w)=\frac12.
\]
Under (C4) and the conditions of Theorem \ref{thm:u_joint}, the delta method gives limiting variances
\[
V_{\log W_R}=(a,b)V_{D,t}(a,b)^\top,\qquad
V_{\log W_O}=\frac{4V_{W_D}}{(1-W_D^2)^2},\qquad
V_{\mathrm{DOOR}}=\frac14V_{W_D},
\]
with $a=a(W_D,\pi_{\mathrm{tie}})$ and $b=b(W_D,\pi_{\mathrm{tie}})$. Define the ordinary covariance estimator $\widehat v_{D,t,M}=B\widehat v_{\pi,M}B^\top$ using \ref{app:wald_u}, and denote its entries by $\widehat v_{W_D,M}$, $\widehat c_{D,t,M}$, and $\widehat v_{t,M}$. With $\widehat a=a(\widehat W_D,\widehat\pi_{\mathrm{tie}})$ and $\widehat b=b(\widehat W_D,\widehat\pi_{\mathrm{tie}})$, the full sampling variance estimator is
\[
\widehat v_{\log W_R,M}=\widehat a^2\widehat v_{W_D,M}+2\widehat a\widehat b\widehat c_{D,t,M}+\widehat b^2\widehat v_{t,M}.
\]
Similarly,
\[
\widehat v_{\log W_O,M}=\frac{4\widehat v_{W_D,M}}{(1-\widehat W_D^2)^2},\qquad
\widehat v_{\mathrm{DOOR},M}=\frac14\widehat v_{W_D,M}.
\]
Each satisfies $M\widehat v_{\tau,M}\to_p V_\tau$. At $W_D=0$, $b=0$, so tie probability variability contributes nothing at first order to the null variance of $\log(\widehat W_R)$. Under the additional FCL calibration conditions, $\widehat a^2\widehat v_{W_D,M}^{\mathrm{FCL}}$ is therefore a null consistent variance estimator.

\section{Regularity conditions and proofs for the bivariate clustered $U$-statistic Wald test (ZJ)}
\label{app:wald_u}

We derive the joint probability limit using centered cluster pair kernels. This extends the clustered two-sample $U$-statistic construction of \citet{zhang2021inference} while explicitly accounting for the random arm-specific mean cluster sizes in the estimators.

For $c\in\{\mathrm{win},\mathrm{loss}\}$, write
\[
H_{ik}^c=\sum_{j=1}^{N_i}\sum_{l=1}^{N_k}\phi_c(\bm Y_{ij},\bm Y_{kl}),\qquad
U_c=\frac{1}{M_1M_0}\sum_{i:A_i=1}\sum_{k:A_k=0}H_{ik}^c,
\qquad \widehat\pi_c=\frac{U_c}{\overline N_1\overline N_0}.
\]
The population mean of the raw count kernel is $\theta_c=\mu_1\mu_0\pi_c$, not $\overline N_1\overline N_0\pi_c$. Define the subject-level projections
\[
\varphi_1^c(\bm y)=\int\phi_c(\bm y,\bm z)F_0(d\bm z)-\pi_c,
\qquad
\varphi_0^c(\bm z)=\int\phi_c(\bm y,\bm z)F_1(d\bm y)-\pi_c,
\]
and their cluster sums
\[
G_{ai}^c=\sum_{j=1}^{N_i}\varphi_a^c(\bm Y_{ij}),\qquad
\bm G_{ai}=(G_{ai}^{\mathrm{win}},G_{ai}^{\mathrm{loss}})^\top,
\qquad A_i=a.
\]
These equal the quantities denoted by $\bm G_i$ in the main paper. Non-informative cluster size implies $\mathbb E(\bm G_{ai}\mid A_i=a)=\bm0$. In addition to (C1)--(C3), impose the following conditions. Condition (C4) is needed only for the ratio transforms.
\begin{enumerate}
\item[(C5)] The covariance matrices $\Sigma_a=\operatorname{Var}(\bm G_{ai}\mid A_i=a)$ are finite. This follows from (C3) for the bounded win and loss kernels, since $\|\bm G_{ai}\|\leq\sqrt2N_i$.
\item[(C6)] With
\[
V_\pi=\frac{\Sigma_1}{q\mu_1^2}+\frac{\Sigma_0}{(1-q)\mu_0^2},
\]
we have $\nabla g(\bm\pi)^\top V_\pi\nabla g(\bm\pi)>0$ for each estimand on which inference is performed. The matrix itself need not be positive definite.
\end{enumerate}

\begin{theorem}
\label{thm:u_joint}
Under (C1)--(C3), (C5), and (C6),
\[
\widehat{\bm\pi}-\bm\pi=\frac{1}{M_1\mu_1}\sum_{i:A_i=1}\bm G_{1i}+\frac{1}{M_0\mu_0}\sum_{k:A_k=0}\bm G_{0k} +o_p(M^{-1/2}),
\]
and
\[
\sqrt M(\widehat{\bm\pi}-\bm\pi)\xrightarrow{d}N(\bm0,V_\pi).
\]
The first-order ordinary sampling covariance is
\[
v_{\pi,M}=\frac{\Sigma_1}{M_1\mu_1^2}+\frac{\Sigma_0}{M_0\mu_0^2},
\qquad Mv_{\pi,M}\longrightarrow V_\pi.
\]
\end{theorem}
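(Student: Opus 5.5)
The plan is to mirror the proof of Theorem \ref{thm:wd_clt_app}, applied coordinatewise to the win and loss kernels and then stacked. For $c\in\{\mathrm{win},\mathrm{loss}\}$, a treated cluster $i$ and a control cluster $k$, I would set $\widetilde H_{ik}^c=H_{ik}^c-\pi_cN_iN_k$. Non-informative cluster size and between-cluster independence then give
\[
\mathbb E(\widetilde H_{ik}^c\mid\mathcal C_i)=\mu_0G_{1i}^c,\qquad \mathbb E(\widetilde H_{ik}^c\mid\mathcal C_k)=\mu_1G_{0k}^c,\qquad \mathbb E(\widetilde H_{ik}^c)=0.
\]
The first identity holds because, conditional on $\mathcal C_i$, the expected number of control partners is $\mu_0$, and each partner outcome has marginal law $F_0$ regardless of $N_k$. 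The second follows symmetrically. Next I would define the degenerate remainder $D_{ik}^c=\widetilde H_{ik}^c-\mu_0G_{1i}^c-\mu_1G_{0k}^c$, which has conditional mean zero given either cluster. This yields the exact Hoeffding-type decomposition
\[
U_c-\overline N_1\overline N_0\pi_c=\frac{\mu_0}{M_1}\sum_{i:A_i=1}G_{1i}^c+\frac{\mu_1}{M_0}\sum_{k:A_k=0}G_{0k}^c+R_{c,M},\qquad R_{c,M}=\frac{1}{M_1M_0}\sum_{i,k}D_{ik}^c.
\]

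Second, I would bound the remainder. Distinct cluster pairs give uncorrelated $D^c$ terms: disjoint pairs are independent, and pairs sharing one cluster are handled by conditioning on the shared cluster. Since $|\widetilde H_{ik}^c|\le 2N_iN_k$ and (C3) gives finite second moments of $N_i$, this yields $\mathbb E(R_{c,M}^2)=O(M^{-2})$, so $R_{c,M}=O_p(M^{-1})$.

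Third, I would divide by $\overline N_1\overline N_0$, using $\widehat\pi_c=U_c/(\overline N_1\overline N_0)$. The leading terms become $\mu_0/(M_1\overline N_1\overline N_0)\sum G_{1i}^c$ and its control analogue. By the law of large numbers, $\overline N_a\to_p\mu_a$, and the sums $M_a^{-1}\sum G_{ai}^c$ are $O_p(M^{-1/2})$. Replacing $\mu_0/(\overline N_1\overline N_0)$ by $1/\mu_1$ therefore costs $o_p(1)\cdot O_p(M^{-1/2})=o_p(M^{-1/2})$. The centering term produces no extra error, because the decomposition is exact in $\overline N_1\overline N_0\pi_c$. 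This is precisely why the random mean sizes cause no bias at first order. Stacking the two coordinates gives the stated linearization.

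Finally, I would establish the normal limit. The two arm sums are independent sums of i.i.d. mean-zero random vectors with finite covariances $\Sigma_a$, by (C5). Applying the multivariate CLT within each arm, the Cramér--Wold device, $M_1/M\to q$, and Slutsky gives $\sqrt M(\widehat{\bm\pi}-\bm\pi)\to N(\bm0,V_\pi)$. The statement $Mv_{\pi,M}\to V_\pi$ is immediate from $M/M_1\to1/q$ and $M/M_0\to1/(1-q)$. Condition (C6) is not needed for the limit itself, only for nondegeneracy of the delta-method variances.

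I expect the main obstacle to be the conditional-mean identities in the first step. There one must justify carefully that $\mathbb E\{\sum_l\phi_c(\bm y,\bm Y_{kl})\}=\mu_0\int\phi_c(\bm y,\bm z)F_0(d\bm z)$. I would prove this by conditioning on $N_k$, summing $N_k$ terms each with marginal law $F_0$ by non-informativeness, and then integrating over $N_k$. The uncorrelatedness argument for the degenerate terms is routine once these identities hold.
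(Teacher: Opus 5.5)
Your proposal is correct and follows the paper's proof essentially step for step. It uses the same centered kernel $\widetilde H_{ik}^c=H_{ik}^c-\pi_cN_iN_k$, the same projections $\mu_0G_{1i}^c$ and $\mu_1G_{0k}^c$, the same exact decomposition with an $O_p(M^{-1})$ degenerate remainder, and a multivariate CLT with Slutsky (the paper applies the CLT to $\bm U-\overline N_1\overline N_0\bm\pi$ before dividing, you divide first, which is immaterial). One small refinement: the paper's non-informative cluster size condition only gives a uniformly selected member law $F_0$, so argue that $\sum_{l=1}^{m}\mathbb{P}(\bm Y_{kl}\in\cdot\mid N_k=m)=mF_0$ rather than that each $\bm Y_{kl}$ has law $F_0$; the same conditional-mean identity then follows by linearity.
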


\begin{proof}
For either component, the treatment arm projection of the uncentered raw count kernel satisfies
\[
\mathbb E(H_{ik}^c\mid\mathcal C_i)-\theta_c =\mu_0G_{1i}^c+\mu_0\pi_c(N_i-\mu_1).
\]
The control-arm projection analogously contains $\mu_1\pi_c(N_k-\mu_0)$. Consider \(\widetilde H_{ik}^c=H_{ik}^c-\pi_cN_iN_k\).  To obtain its treatment-arm projection, integrate first over a control cluster conditional on its size. Non-informative cluster size gives
\begin{align*}
\mathbb E(H_{ik}^c\mid\mathcal C_i)
&=\mu_0\sum_{j=1}^{N_i}\int\phi_c(\bm Y_{ij},\bm z)F_0(d\bm z),\\
\mathbb E(\pi_cN_iN_k\mid\mathcal C_i)&=\mu_0\pi_cN_i.
\end{align*}
Subtracting these two expressions yields
\[
\mathbb E(\widetilde H_{ik}^c\mid\mathcal C_i)=\mu_0G_{1i}^c,
\qquad
\mathbb E(\widetilde H_{ik}^c\mid\mathcal C_k)=\mu_1G_{0k}^c.
\]
Both projections have mean zero. Let
\[
D_{ik}^c=\widetilde H_{ik}^c-\mu_0G_{1i}^c-\mu_1G_{0k}^c.
\]
Its conditional expectation given either cluster is zero. Since $|\widetilde H_{ik}^c|\leq N_iN_k$, (C3) implies finite second moments. Summing the centered kernels gives the exact identity
\[
U_c-\overline N_1\overline N_0\pi_c
=\frac{\mu_0}{M_1}\sum_{i:A_i=1}G_{1i}^c
 +\frac{\mu_1}{M_0}\sum_{k:A_k=0}G_{0k}^c+R_{c,M},
\]
where $R_{c,M}=(M_1M_0)^{-1}\sum_{i:A_i=1}\sum_{k:A_k=0}D_{ik}^c$. Put $\bm R_M=(R_{\mathrm{win},M},R_{\mathrm{loss},M})^\top$ and define $\bm D_{ik}$ correspondingly. Degeneracy and conditional independence imply
\[
\mathbb E\|\bm R_M\|^2=\frac{\mathbb E\|\bm D_{ik}\|^2}{M_1M_0}=O(M^{-2}),
\]
so $\bm R_M=O_p(M^{-1})$. In vector form,
\[
\bm U-\overline N_1\overline N_0\bm\pi=\frac{\mu_0}{M_1}\sum_{i:A_i=1}\bm G_{1i} +\frac{\mu_1}{M_0}\sum_{k:A_k=0}\bm G_{0k}+\bm R_M.
\]
The two projected sums are independent conditional on the allocation and have covariance matrices $\Sigma_1/M_1$ and $\Sigma_0/M_0$. Hence their joint central limit theorem gives
\[
\sqrt M\{\bm U-\overline N_1\overline N_0\bm\pi\}
\xrightarrow{d}N\!\left(\bm0,
\frac{\mu_0^2}{q}\Sigma_1+\frac{\mu_1^2}{1-q}\Sigma_0\right).
\]
Finally, $\overline N_1\overline N_0\to_p\mu_1\mu_0>0$. Dividing the exact decomposition by this denominator and applying Slutsky's theorem proves the covariance $V_\pi$.
\end{proof}

For implementation, the empirical projected cluster vectors can be written directly using cluster-pair counts,
\[
\widehat G_{1i}^c=\frac{1}{n_0}\sum_{k:A_k=0}H_{ik}^c-N_i\widehat\pi_c,
\qquad
\widehat G_{0k}^c=\frac{1}{n_1}\sum_{i:A_i=1}H_{ik}^c-N_k\widehat\pi_c.
\]
These are exactly the within-cluster sums of the empirically centered individual-level projections in the main paper. Let $\widehat{\bm G}_{ai}$ collect wins and losses and put
\[
\overline{\bm G}_a=\frac{1}{M_a}\sum_{i:A_i=a}\widehat{\bm G}_{ai},\qquad
\widehat\Sigma_a=\frac{1}{M_a-1}\sum_{i:A_i=a}
(\widehat{\bm G}_{ai}-\overline{\bm G}_a)(\widehat{\bm G}_{ai}-\overline{\bm G}_a)^\top.
\]
In exact arithmetic, $\overline{\bm G}_a=\bm0$. The finite-sample covariance estimator remains
\[
\widehat v_{\pi,M}=\frac{\widehat\Sigma_1}{M_1\overline N_1^2}+\frac{\widehat\Sigma_0}{M_0\overline N_0^2}.
\]
Equivalently,
\[
\widehat\Sigma_U=\frac{\overline N_0^2}{M_1}\widehat\Sigma_1+\frac{\overline N_1^2}{M_0}\widehat\Sigma_0,
\qquad
\widehat\Sigma_\pi=\frac{\widehat\Sigma_U}{(\overline N_1\overline N_0)^2} =\widehat v_{\pi,M},
\]
where $\widehat\Sigma_\pi$ is the ordinary covariance in the main paper.

Consider the treatment-arm projection for component $c$, and define
\[
B_{1i}^c=\mathbb E(H_{ik}^c\mid\mathcal C_i),\qquad
E_{1i}^c=\frac{1}{M_0}\sum_{k:A_k=0}H_{ik}^c-B_{1i}^c.
\]
Conditional independence and bounded kernels give
\[
\mathbb E\{(E_{1i}^c)^2\mid\mathcal C_i\}\leq\frac{N_i^2\mathbb E(N_k^2\mid A_k=0)}{M_0}.
\]
Since $|B_{1i}^c|\leq\mu_0N_i$, the identity
\[
\widehat G_{1i}^c-G_{1i}^c=\frac{E_{1i}^c}{\overline N_0}+\left(\frac{1}{\overline N_0}-\frac{1}{\mu_0}\right)B_{1i}^c-N_i(\widehat\pi_c-\pi_c)
\]
implies
\[
\frac{1}{M_1}\sum_{i:A_i=1}(\widehat G_{1i}^c-G_{1i}^c)^2=o_p(1).
\]
The first term follows by the conditional mean square bound and Markov's inequality, and the remaining terms follow from consistency of $\overline N_0$ and $\widehat\pi_c$ and the law of large numbers for $N_i^2$. The control-arm argument is identical. The laws of large numbers for $\bm G_{ai}$ and $\bm G_{ai}\bm G_{ai}^\top$, together with Cauchy--Schwarz, then give $\widehat\Sigma_a\to_p\Sigma_a$. It follows that
\[
M\widehat v_{\pi,M}\xrightarrow{p}V_\pi.
\]
This accounts for the dependence among the empirical projected vectors through their mean square approximation.

For a differentiable $\tau=g(\bm\pi)$, define
\[
V_\tau=\nabla g(\bm\pi)^\top V_\pi\nabla g(\bm\pi),\qquad
\widehat v_{\tau,M}=\nabla g(\widehat{\bm\pi})^\top\widehat v_{\pi,M}\nabla g(\widehat{\bm\pi}).
\]
The delta method gives
\[
\sqrt M(\widehat\tau-\tau)\xrightarrow{d}N(0,V_\tau),\qquad
M\widehat v_{\tau,M}\xrightarrow{p}V_\tau.
\]
Thus $(\widehat\tau-\tau_0)/\sqrt{\widehat v_{\tau,M}}\to_d N(0,1)$ under the corresponding null. A $t$ reference with $M-2$ degrees of freedom is a finite-sample calibration choice. The gradients, evaluated at the population probabilities, are
\begin{align*}
\nabla W_D&=(1,-1)^\top,\\
\nabla\log W_R&=(1/\pi_{\mathrm{win}},-1/\pi_{\mathrm{loss}})^\top,\\
\nabla\log W_O&=\frac{2}{1-W_D^2}(1,-1)^\top,\\
\nabla\mathrm{DOOR}&=\tfrac12(1,-1)^\top.
\end{align*}
Condition (C4) ensures that the ratio derivatives are finite and continuous near the truth. The joint limit for $(\widehat W_D,\widehat\pi_{\mathrm{tie}})$ in \ref{app:wald} follows by applying the matrix $B$ to Theorem \ref{thm:u_joint}. In particular, the signed and bivariate representations yield the same first order sampling variance for the common estimator $\widehat W_D$.

\section{Randomization-based tests and implementation}
\label{app:score}

We derive the finite-population variance of the cluster FS score under complete randomization and distinguish its normal approximation from Monte Carlo randomization tests. In this section. we use a sharp null under which the complete observed cluster records entering the comparison rule are unaffected by assignment. This includes the cluster sizes and observation or censoring features needed to keep the comparison scores fixed. 

Let $\mathcal A=\{\bm a\in\{0,1\}^M:\sum_i a_i=M_1\}$ and $q_M=M_1/M$. Under the sharp null, the outcomes and pairwise comparison scores remain fixed across assignments, while the test statistic is recomputed for each assignment. Conditional on these fixed records, write
\[
S_i=\sum_{j=1}^{N_i}\sum_{k=1}^M\sum_{l=1}^{N_k}s_{ij,kl},\qquad
\sum_{i=1}^M S_i=0,\qquad \overline S=M^{-1}\sum_iS_i.
\]
For $\bm A$ uniform on $\mathcal A$,
\[
\mathbb E(A_i)=q_M,\qquad \operatorname{Var}(A_i)=q_M(1-q_M),\qquad
\operatorname{Cov}(A_i,A_{i'})=-\frac{q_M(1-q_M)}{M-1},\quad i\ne i'.
\]
Define $T_S(\bm A)=\sum_i(A_i-q_M)S_i$. Its conditional expectation is zero and its exact conditional variance is
\[
v_{S,M}=\operatorname{Var}\{T_S(\bm A)\mid\bm S\}
=\frac{q_M(1-q_M)M}{M-1}\sum_{i=1}^M(S_i-\overline S)^2.
\]

\begin{proof}
Treating the scores as fixed gives
\begin{align*}
\operatorname{Var}\{T_S(\bm A)\mid\bm S\}
&=q_M(1-q_M)\sum_iS_i^2-\frac{2q_M(1-q_M)}{M-1}\sum_{i<i'}S_iS_{i'}\\
&=\frac{q_M(1-q_M)M}{M-1}\left(\sum_iS_i^2-M\overline S^{\,2}\right),
\end{align*}
where $2\sum_{i<i'}S_iS_{i'}=(\sum_iS_i)^2-\sum_iS_i^2$ was used in the second equality. This is the asserted variance.
\end{proof}

The analytic cluster FS statistic is
\[
Z_{\mathrm{FS}}=\frac{T_S(\bm A)}{\sqrt{v_{S,M}}}.
\]
For a sequence of finite populations, suppose $q_M\to q\in(0,1)$, $\sum_i(S_i-\overline S)^2>0$, and
\[
\frac{\max_i(S_i-\overline S)^2}{\sum_i(S_i-\overline S)^2}\longrightarrow0.
\]
The finite population central limit theorem then gives $Z_{\mathrm{FS}}\to_d N(0,1)$ under the randomization distribution \citep{li2017general}. The normal reference FS test is asymptotic, even though its conditional variance is exact. A $t$ reference is an additional finite-sample approximation. 

For a prespecified statistic $\widehat\tau(\bm a)$ with null center $\tau_0$, a two-sided Monte Carlo randomization test using $B$ independently and uniformly sampled assignments $\bm a^{(1)},\ldots,\bm a^{(B)}\in\mathcal A$ is
\[
\widehat p_\tau=\frac{1+\sum_{b=1}^B\mathbb I\{|\widehat\tau(\bm a^{(b)})-\tau_0|\geq |\widehat\tau(\bm A)-\tau_0|\}}{B+1}.
\]
The observed assignment and the sampled assignments are exchangeable under the sharp null. Thus the inclusive rank and the addition of one produce a valid Monte Carlo randomization test, with possible conservativeness from discreteness and ties. Exhaustive enumeration gives the corresponding exact randomization distribution. Both require the statistic and its ordering rule to be specified for every assignment considered.

When the statistic is a normalized win measure, each assignment requires
\[
n_1(\bm a)=\sum_i a_iN_i,\qquad n_0(\bm a)=\sum_i(1-a_i)N_i,
\qquad
\widehat W_D(\bm a)=\frac{T_S(\bm a)}{n_1(\bm a)n_0(\bm a)},
\]
together with the win, loss, and tie counts for that assignment. A test ordering assignments by $|T_S(\bm a)|$ is not generally the same test as one ordering them by $|\widehat W_D(\bm a)|$ when cluster sizes are unequal.

For log-scale ratio tests, use null center zero for $\log(\widehat W_R)$ and $\log(\widehat W_O)$, zero for $\widehat W_D$, and $1/2$ for $\widehat{\mathrm{DOOR}}$. At every assignment where the expressions are defined,
\[
|\widehat{\mathrm{DOOR}}(\bm a)-1/2|=\tfrac12|\widehat W_D(\bm a)|,
\qquad
|\log\widehat W_O(\bm a)|=2\operatorname{atanh}\{|\widehat W_D(\bm a)|\}.
\]
The transformations on the right are strictly increasing in $|\widehat W_D(\bm a)|$. Therefore the three procedures have identical extremeness indicators and identical Monte Carlo $p$-values when they use the same assignments.

A native-scale permutation test for a ratio instead uses $|\widehat W_R(\bm a)-1|$ or $|\widehat W_O(\bm a)-1|$, which is a different two-sided ordering from absolute log deviation.

\section{Regularity conditions and asymptotic theory for the cluster-level JEL test}
\label{app:jel}

We establish the jackknife empirical likelihood ratio result through approximation by independent cluster influence contributions. The construction follows the empirical likelihood and JEL framework of \citet{qin1994empirical,jing2009jackknife,peng2018jackknife}, and Peng and Tan. The observed pseudo-values are not independent, because their leave-one-cluster samples overlap.

For a scalar estimator $\widehat\tau$, define
\[
\widehat\tau_i=M\widehat\tau-(M-1)\widehat\tau_{(-i)},\qquad
\overline{\widehat\tau}_{\mathrm{PV}}=\frac1M\sum_i\widehat\tau_i,
\qquad Z_i(t)=\widehat\tau_i-t.
\]
For a candidate value $t$, let
\[
\mathcal L(t)=\sup\left\{\prod_{i=1}^Mp_i:p_i\geq0,\ \sum_i p_i=1,\ \sum_i p_iZ_i(t)=0\right\}.
\]
The exact pseudo-value identity is
\[
\overline{\widehat\tau}_{\mathrm{PV}}=M\widehat\tau-(M-1)\overline{\widehat\tau}_{(-)},
\qquad \overline{\widehat\tau}_{(-)}=\frac1M\sum_i\widehat\tau_{(-i)}.
\]
The pseudo-value mean need not equal $\widehat\tau$. By the arithmetic-geometric mean inequality, the unrestricted maximum is $M^{-M}$, attained at equal weights and $t=\overline{\widehat\tau}_{\mathrm{PV}}$. Thus the likelihood ratio is
\[
R(t)=-2\log\{\mathcal L(t)/M^{-M}\}.
\]

In addition to (C1)--(C2), assume the following conditions for the chosen scalar target.
\begin{enumerate}
\item[(C7)] There are independent, mean-zero cluster influence contributions $\xi_{i,M}$ such that
\[
\widehat\tau-\tau=\frac1M\sum_i\xi_{i,M}+o_p(M^{-1/2}),\qquad
\frac1M\sum_i\operatorname{Var}(\xi_{i,M})\longrightarrow V_\tau\in(0,\infty).
\]
Their variances may differ between arms.
\item[(C8)] For some $\delta>0$, $\sup_{M,i}\mathbb E|\xi_{i,M}|^{2+\delta}<\infty$. In particular the Lindeberg condition holds and $M^{-1}\sum_i\xi_{i,M}^2\to_p V_\tau$.
\item[(C9)] The leave-one-cluster estimates are uniformly consistent and the pseudo-values satisfy both the mean and mean-square approximations
\[
\max_i|\widehat\tau_{(-i)}-\tau|=o_p(1),\qquad
\overline{\widehat\tau}_{\mathrm{PV}}-\widehat\tau=o_p(M^{-1/2}),
\]
\[
\frac1M\sum_i(\widehat\tau_i-\tau-\xi_{i,M})^2=o_p(1).
\]
\item[(C10)] Under $H_0:\tau=\tau_0$, the set $\{Z_i(\tau_0)\}$ contains both positive and negative values with probability tending to one.
\end{enumerate}

\begin{theorem}
\label{thm:jel_crt_wilks}
Under (C1), (C2), and (C7)--(C10),
\[
R(\tau_0)\xrightarrow{d}\chi_1^2\qquad\text{under }H_0:\tau=\tau_0.
\]
Thus rejection when $R(\tau_0)>\chi^2_{1,1-\alpha}$ has asymptotic level $\alpha$.
\end{theorem}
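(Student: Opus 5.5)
We follow Owen's classical argument for the empirical likelihood Wilks theorem, applied to the pseudo-values \(Z_i=Z_i(\tau_0)\). Throughout we use the mean and mean-square approximations in (C9) to transfer every moment computation to the independent influence contributions \(\xi_{i,M}\) of (C7).

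\textbf{Step 1: moments of the pseudo-values.} First show that \(M^{-1}\sum_i Z_i^2\to_p V_\tau\) and \(M^{-1/2}\sum_i Z_i\to_d N(0,V_\tau)\) under \(H_0\). For the second claim, write
\[
M^{-1/2}\sum_i Z_i=\sqrt M(\overline{\widehat\tau}_{\mathrm{PV}}-\widehat\tau)+\sqrt M(\widehat\tau-\tau_0).
\]
The first term is \(o_p(1)\) by (C9). The second is asymptotically normal with variance \(V_\tau\) by (C7), (C8) and the Lindeberg CLT. For the first claim, expand
\[
Z_i^2=\xi_{i,M}^2+2\xi_{i,M}e_i+e_i^2,\qquad e_i=\widehat\tau_i-\tau_0-\xi_{i,M}.
\]
Then \(M^{-1}\sum e_i^2=o_p(1)\) by (C9), the cross term is controlled by Cauchy--Schwarz, and \(M^{-1}\sum\xi_{i,M}^2\to_p V_\tau\) by (C8).

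\textbf{Step 2: size of the largest pseudo-value.} This is the main obstacle, since the pseudo-values are dependent and individually need not be small. Show that \(Z^*=\max_i|Z_i|=o_p(M^{1/2})\). Bound
\[
|Z_i|\le|\xi_{i,M}|+|e_i|.
\]
The moment condition in (C8) with a union bound gives \(\max_i|\xi_{i,M}|=O_p(M^{1/(2+\delta)})=o_p(M^{1/2})\). For the remainder, \(\max_i e_i^2\le\sum_i e_i^2=o_p(M)\) by (C9), so \(\max_i|e_i|=o_p(M^{1/2})\). Hence \(Z^*=o_p(M^{1/2})\). Uniform consistency of \(\widehat\tau_{(-i)}\) in (C9) is available as an additional check but is not strictly needed here.

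\textbf{Step 3: bound the Lagrange multiplier.} By (C10), with probability tending to one the origin lies inside the convex hull of the \(Z_i\), so the multiplier \(\lambda\) exists and the weights \(1+\lambda Z_i\) are positive. From the estimating equation \(\sum_i Z_i/(1+\lambda Z_i)=0\), the standard manipulation gives
\[
|\lambda|\,\frac{\sum_i Z_i^2}{M(1+|\lambda|Z^*)}\le\Bigl|\frac1M\sum_i Z_i\Bigr|=O_p(M^{-1/2}).
\]
Combining this with Steps 1 and 2 yields \(|\lambda|=O_p(M^{-1/2})\) and \(\max_i|\lambda Z_i|=o_p(1)\).

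\textbf{Step 4: expand the statistic.} Expand the estimating equation to obtain
\[
\lambda=\frac{\sum_i Z_i}{\sum_i Z_i^2}+o_p(M^{-1/2}),
\]
where the remainder is bounded by \(|\lambda|^2 M^{-1}\sum|Z_i|^3/(M^{-1}\sum Z_i^2)\le|\lambda|^2Z^*\cdot O_p(1)=o_p(M^{-1/2})\). Next, Taylor expand \(R(\tau_0)=2\sum_i\log(1+\lambda Z_i)\) to second order, using \(\max_i|\lambda Z_i|=o_p(1)\) so that the cubic terms are negligible. This gives
\[
R(\tau_0)=\frac{\bigl(M^{-1/2}\sum_i Z_i\bigr)^2}{M^{-1}\sum_i Z_i^2}+o_p(1).
\]

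\textbf{Step 5: conclude and reconcile normalizations.} By Step 1 and Slutsky's theorem, the leading ratio converges to \(\chi^2_1\). Finally, the paper's main-text normalization by \(\mathcal L(\widehat\tau)\) differs from the appendix normalization \(M^{-M}\) attained at \(\overline{\widehat\tau}_{\mathrm{PV}}\). Since the appendix defines \(R(t)\) relative to \(M^{-M}\), the theorem concerns that version, and no further adjustment is needed.
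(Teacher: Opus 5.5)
Your proposal is correct and follows essentially the same Owen-type argument as the paper. The steps match: you transfer the first two moments of the pseudo-values to the independent $\xi_{i,M}$ via (C7)--(C9), obtain $\max_i|Z_i|=o_p(\sqrt M)$ from (C8) together with the mean-square bound in (C9), bound $|\lambda|=O_p(M^{-1/2})$ from the Lagrange identity, and expand to $R(\tau_0)=(M^{-1/2}\sum_i Z_i)^2/(M^{-1}\sum_i Z_i^2)+o_p(1)$. Your explicit split $\overline Z=(\overline{\widehat\tau}_{\mathrm{PV}}-\widehat\tau)+(\widehat\tau-\tau_0)$ also spells out where the mean approximation in (C9) enters, a step the paper leaves implicit.
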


\begin{proof}
Write $Z_i=Z_i(\tau_0)$, $\overline Z=M^{-1}\sum_iZ_i$, $Q_M=M^{-1}\sum_iZ_i^2$, and $Z_{\max}=\max_i|Z_i|$. Conditions (C7)--(C9) imply
\[
\sqrt M\,\overline Z\xrightarrow{d}N(0,V_\tau),\qquad Q_M\xrightarrow{p}V_\tau.
\]
For the second assertion, apply Cauchy--Schwarz to the difference between the empirical second moments of $Z_i$ and $\xi_{i,M}$. The moment bound in (C8) gives
\[
\mathbb P\{\max_i|\xi_{i,M}|>\epsilon\sqrt M\}\leq\frac{\sum_i\mathbb E|\xi_{i,M}|^{2+\delta}} {\epsilon^{2+\delta}M^{1+\delta/2}}\longrightarrow0.
\]
Moreover, (C9) gives
\[
\frac{\max_i|Z_i-\xi_{i,M}|}{\sqrt M} \leq\left\{\frac1M\sum_i(Z_i-\xi_{i,M})^2\right\}^{1/2}=o_p(1).
\]
Therefore $Z_{\max}=o_p(\sqrt M)$.

On the condition in (C10), the Lagrange equation
\[
f_M(\lambda)=\frac1M\sum_i\frac{Z_i}{1+\lambda Z_i}=0
\]
is strictly decreasing on the interval where all denominators are positive and has a unique root there. Its sign equals the sign of $\overline Z$. The exact identity
\[
\overline Z=\lambda\frac1M\sum_i\frac{Z_i^2}{1+\lambda Z_i}
\]
implies
\[
|\lambda|Q_M\leq|\overline Z|(1+|\lambda|Z_{\max}).
\]
Since $|\overline Z|Z_{\max}=o_p(1)$ and $Q_M\to_p V_\tau>0$, it follows that
\[
|\lambda|\leq\frac{|\overline Z|}{Q_M-|\overline Z|Z_{\max}}=O_p(M^{-1/2}).
\]
Thus $\max_i|\lambda Z_i|=o_p(1)$ and
\[
\lambda=\frac{\overline Z}{Q_M}+o_p(M^{-1/2}).
\]
The constrained maximizing weights are $p_i=\{M(1+\lambda Z_i)\}^{-1}$, so
\[
R(\tau_0)=2\sum_i\log(1+\lambda Z_i).
\]
The third-order Taylor remainder is bounded by a constant times $\max_i|\lambda Z_i|\,\lambda^2\sum_iZ_i^2=o_p(1)$. Therefore
\[
R(\tau_0)=2M\lambda\overline Z-M\lambda^2Q_M+o_p(1) =\frac{M\overline Z^2}{Q_M}+o_p(1)\xrightarrow{d}\chi_1^2.
\]
\end{proof}

Let $\widehat\tau=g(\widehat{\bm\pi})$, where $g$ is twice continuously differentiable near the true probabilities, as holds for the four specified scales under their interior conditions. Theorem \ref{thm:u_joint} gives (C7) with \(\xi_{i,M}=\frac{\nabla g(\bm\pi)^\top\bm G_{ai}}{q_{a,M}\mu_a}\) with \(A_i=1\), \(q_{1,M}=q_M\), and  \(q_{0,M}=1-q_M\). Condition (C3) and $\|\bm G_{ai}\|\leq\sqrt2N_i$ give (C8).

For a cluster in arm $a$, direct deletion of its cross-arm counts yields the exact probability vector identity
\[
\bm\Delta_i:=\widehat{\bm\pi}-\widehat{\bm\pi}_{(-i)} =\frac{\widehat{\bm G}_{ai}}{n_a-N_i}.
\]
Here both arm totals and all cross-arm counts are recomputed after deletion. The empirical projections satisfy
$\sum_{i:A_i=a}\widehat{\bm G}_{ai}=\bm0$. Consequently,
\[\sum_{i=1}^M\bm\Delta_i=\sum_{a=0}^1\sum_{i:A_i=a} \frac{N_i\widehat{\bm G}_{ai}}{n_a(n_a-N_i)}.
\]
The moment condition gives $\sum_iN_i^2=O_p(M)$ and $\max_iN_i=o_p(\sqrt M)$, whereas $n_a/M\to_p q_a\mu_a>0$. Also $\|\widehat{\bm G}_{ai}\|\leq\sqrt2N_i$. Uniformly with probability tending to one, therefore,
\[
\left\|\sum_i\bm\Delta_i\right\|=O_p(M^{-1}),\qquad
\sum_i\|\bm\Delta_i\|^2=O_p(M^{-1}),\qquad
\max_i\|\bm\Delta_i\|=o_p(1).
\]
This proves uniform consistency of the leave-one-cluster probability estimates and keeps all required transforms in a neighborhood where their derivatives are bounded. Taylor expansion gives
\[
\widehat\tau-\widehat\tau_{(-i)} =\nabla g(\widehat{\bm\pi})^\top\bm\Delta_i+O(\|\bm\Delta_i\|^2)
\]
uniformly. Since
\[
\overline{\widehat\tau}_{\mathrm{PV}}-\widehat\tau =\frac{M-1}{M}\sum_i(\widehat\tau-\widehat\tau_{(-i)}),
\]
the two sum bounds imply
\[
\overline{\widehat\tau}_{\mathrm{PV}}-\widehat\tau=O_p(M^{-1})=o_p(M^{-1/2}).
\]
This shows the required mean approximation for the probability ratio estimators and their smooth transforms.

For the mean square approximation, write
\[
\widehat\tau_i-\tau=(\widehat\tau-\tau)+\frac{M-1}{n_a-N_i}\nabla g(\widehat{\bm\pi})^\top\widehat{\bm G}_{ai}+r_{i,M}.
\]
The Taylor bound gives $|r_{i,M}|\leq C N_i^2/M$ with probability tending to one. Hence
\[
\frac1M\sum_i r_{i,M}^2\leq\frac{C}{M^3}\sum_iN_i^4\leq\frac{C\max_iN_i^2}{M^3}\sum_iN_i^2=o_p(1).
\]
Uniformly within each arm,
$(M-1)/(n_a-N_i)-(q_{a,M}\mu_a)^{-1}=o_p(1)$. Combine this with consistency of $\nabla g(\widehat{\bm\pi})$, the meanv square approximation for $\widehat{\bm G}_{ai}$ proved in \ref{app:wald_u}, and $M^{-1}\sum_i\|\bm G_{ai}\|^2=O_p(1)$. The result is
\[
\frac1M\sum_i(\widehat\tau_i-\tau-\xi_{i,M})^2=o_p(1),
\]
which completes verification of (C9). At least one arm has a nondegenerate mean-zero influence distribution by (C6). It therefore has positive probability above $2\epsilon$ and below $-2\epsilon$ for some $\epsilon>0$. The within-arm laws of large numbers give order-$M$ counts in both sets. The number of indices for which $|\widehat\tau_i-\tau-\xi_{i,M}|\geq\epsilon$ is bounded by $\epsilon^{-2}\sum_i(\widehat\tau_i-\tau-\xi_{i,M})^2=o_p(M)$. Thus positive and negative pseudo-value deviations remain with probability tending to one, proving (C10) at the true value.

For ratio-type estimands, inference is performed on the log scale and results are back-transformed for reporting. Positivity of the full-sample and leave-one-cluster estimates, together with feasibility of the empirical likelihood constraint, must be checked. Logging does not by itself ensure these properties. The pseudo-values for a log target are formed from the log full-sample and log leave-one-cluster estimates For observed data, compute the full and leave-one-cluster estimates on the selected inferential scale, construct $Z_i(\tau_0)$, and check that all estimates are finite. If both signs occur, solve the Lagrange equation over
\[
-\frac{1}{\max_{Z_i>0}Z_i}<\lambda< -\frac{1}{\min_{Z_i<0}Z_i},
\]
and report $1-F_{\chi_1^2}\{R(\tau_0)\}$. If zero lies outside the convex hull, the constraint is infeasible. If zero is on its boundary but not every $Z_i$ is zero, every feasible product has a zero factor and the empirical likelihood ratio is infinite. If all $Z_i=0$, the empirical likelihood ratio is instead zero, although the nondegenerate Wilks approximation does not apply. Confidence sets on the selected scale are obtained by inversion,
\[
\{t:R(t)\leq\chi^2_{1,1-\alpha}\}.
\]

\section{Additional simulation results}
Figure \ref{fig:sim_typei_20_z}-\ref{fig:sim_power_100_t} present the empirical type I error and power results under the same simulation settings as in the main text, with the difference that all Wald-type procedures are referenced to the standard normal distribution.

\begin{sidewaysfigure}
    \centering
    \includegraphics[width=0.9\linewidth]{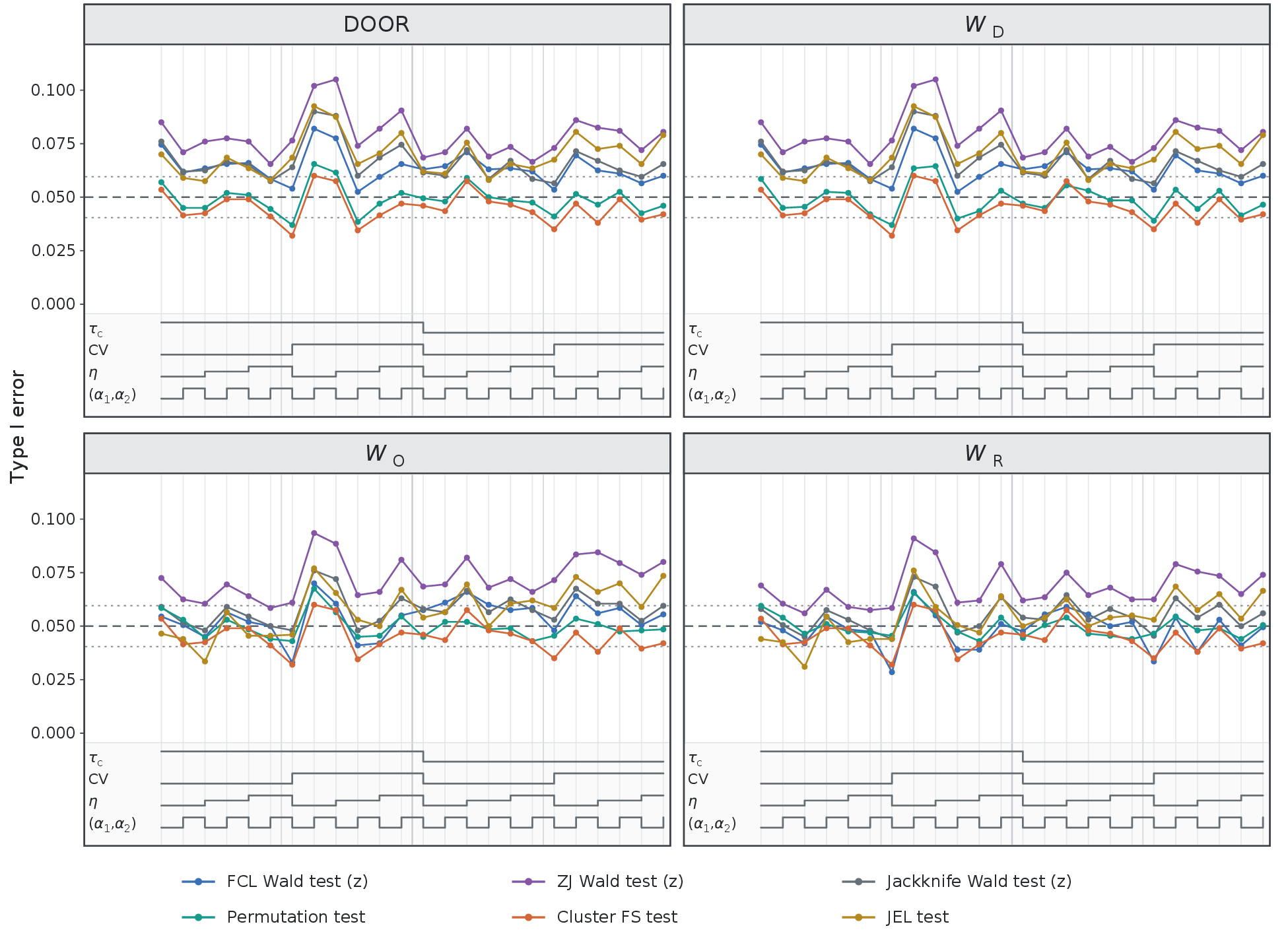}\\
    \caption{Empirical type I error for tests of win statistics in parallel-arm cluster-randomized trials with \(M=20\) under the global null \((\theta_1,\theta_2)=(0,0)\) across Monte Carlo 2,000 replicates with \(M=20\) clusters. Within each panel, results are shown for the four win statistics: \(W_D\), \(W_R\), \(W_O\), and \(\mathrm{DOOR}\). The procedure includes Wald z test proposed by \citet{fang2025sample}; Wald z test proposed by \citet{zhang2021inference}; Wald z test with delete-one-cluster jackknife standard errors; permutation test; cluster-level FS test \citep{finkelstein1999combining}; and the jackknife empirical likelihood ratio (JEL) test. The horizontal dashed line marks the nominal two-sided level \(\alpha=0.05\). The two horizontal dotted lines indicate the Monte Carlo variance band \(0.05 \pm 1.96\sqrt{0.05(1-0.05)/2000}\). The annotation strip above the encodes the scenario factors by: \(\pi_{\text{tie}}\) (35\% versus \(7\%\)), \(\mathrm{CV}(N_i)\) (0.3 versus 0.5 ), within-individual dependence parameter \(\eta\) (1, 2, versus 4), and frailty shape parameters \((\alpha_1,\alpha_2)\) (\((1,1)\) versus (2,2) ).}
    \label{fig:sim_typei_20_z} 
\end{sidewaysfigure}

\begin{sidewaysfigure}
    \centering
    \includegraphics[width=0.9\linewidth]{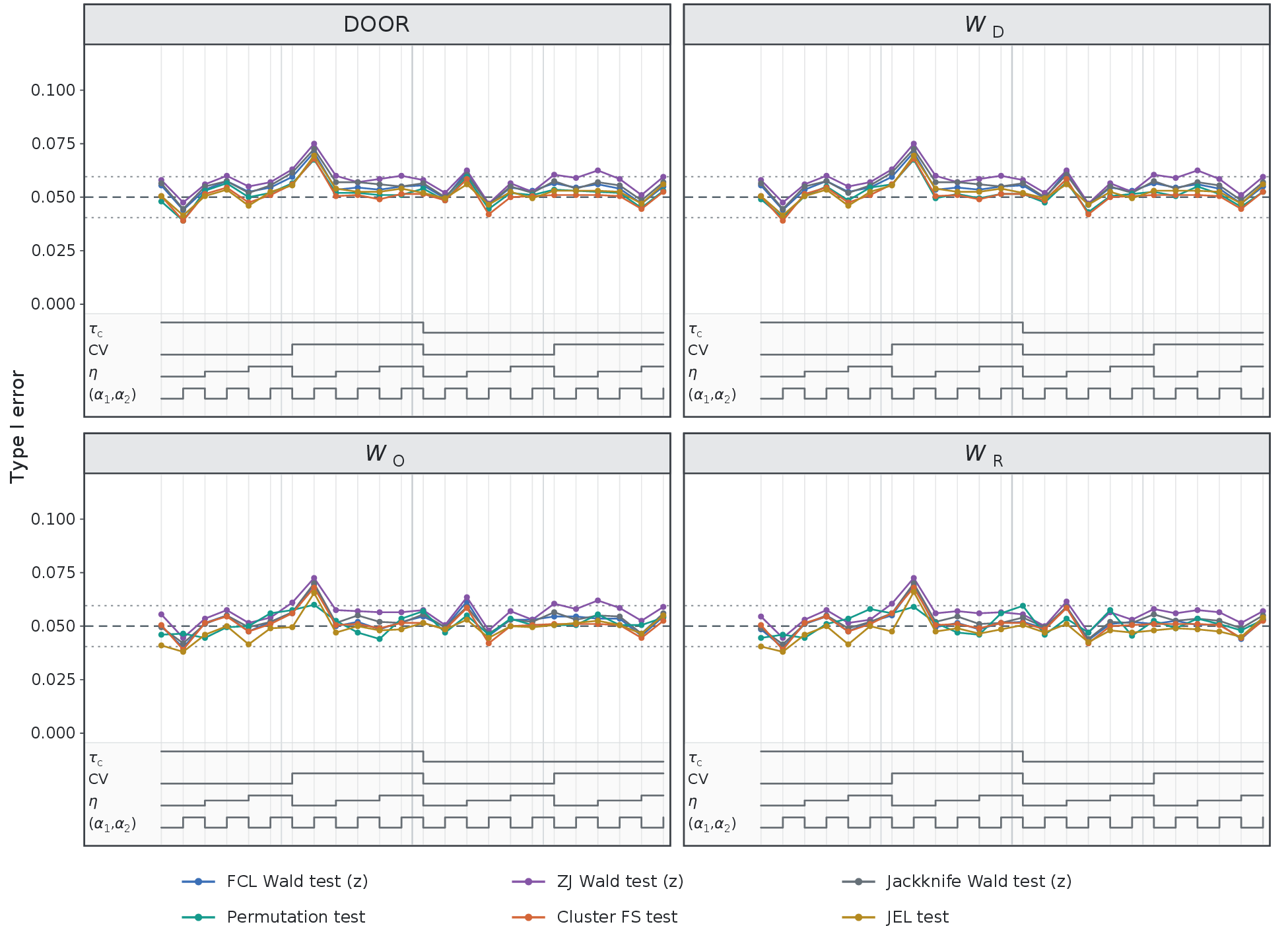}\\
    \caption{Empirical type I error for tests of win statistics in parallel-arm cluster-randomized trials with \(M=100\) under the global null \((\theta_1,\theta_2)=(0,0)\) across Monte Carlo 2,000 replicates with \(M=20\) clusters. Within each panel, results are shown for the four win statistics: \(W_D\), \(W_R\), \(W_O\), and \(\mathrm{DOOR}\). The procedure includes Wald t test with df\(=18\) using FCL \citep{fang2025sample}; Wald t test with df\(=18\)using ZJ \citep{zhang2021inference}; Wald t test with df\(=18\) delete-one-cluster jackknife standard errors; permutation test; cluster FS test \citep{finkelstein1999combining}; and the jackknife empirical likelihood (JEL) test. The horizontal dashed line marks the nominal two-sided level \(\alpha=0.05\). The two horizontal dotted lines indicate the Monte Carlo variance band \(0.05 \pm 1.96\sqrt{0.05(1-0.05)/2000}\). The annotation strip above the encodes the scenario factors by: \(\pi_{\text{tie}}\) (35\% versus \(7\%\)), \(\mathrm{CV}(N_i)\) (0.3 versus 0.5 ), within-individual dependence parameter \(\eta\) (1, 2, versus 4), and frailty shape parameters \((\alpha_1,\alpha_2)\) (\((1,1)\) versus (2,2) ).}
    \label{fig:sim_typei_100_t}
\end{sidewaysfigure}

\begin{sidewaysfigure}
    \centering
    \includegraphics[width=0.85\linewidth]{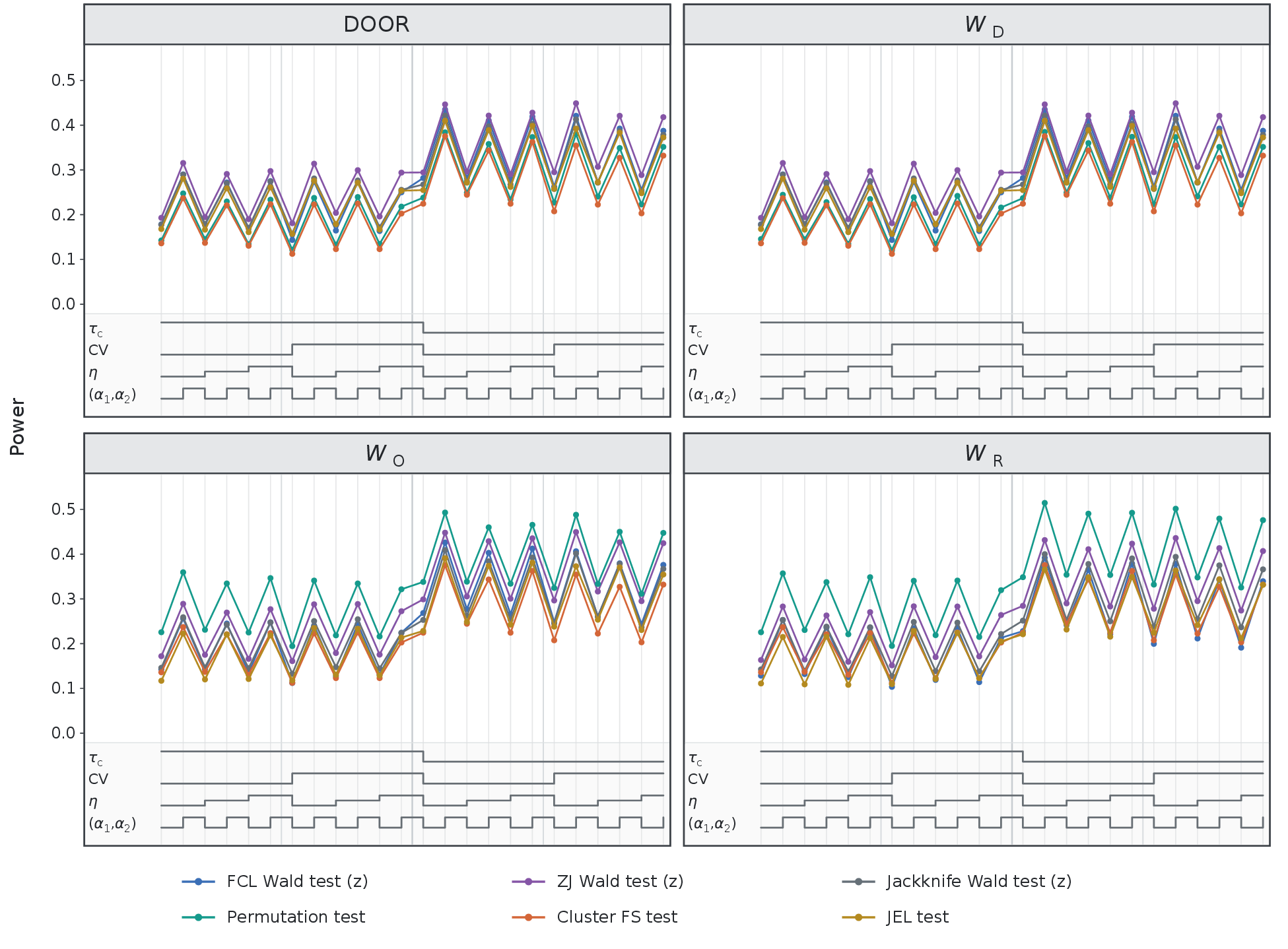}\\
    \caption{Empirical power for tests of win statistics in parallel-arm cluster-randomized trials with \(M=20\) under the concordant beneficial alternative \((\theta_1,\theta_2)=\{\log(0.65),\log(0.50)\}\) across 2{,}000 Monte Carlo replicates. The procedures include the Wald z test proposed by \citet{fang2025sample}; the Wald z test proposed by \citet{zhang2021inference}; a Wald z test with delete-one-cluster jackknife standard errors; the permutation test; the cluster FS test \citep{finkelstein1999combining}; and the jackknife empirical likelihood (JEL) test. Power is computed using a two-sided rejection criterion at the nominal level \(\alpha=0.05\). The annotation strip above the plot encodes the scenario factors by: \(\pi_{\text{tie}}\) (41\% versus 8\%), \(\mathrm{CV}(N_i)\) (0.3 versus 0.5), within-individual dependence parameter \(\eta\) (1, 2, versus 4), and frailty shape parameters \((\alpha_1,\alpha_2)\) (\((1,1)\) versus \((2,2)\)).}
    \label{fig:sim_power_20_z}
\end{sidewaysfigure}

\begin{sidewaysfigure}
    \centering
    \includegraphics[width=0.9\linewidth]{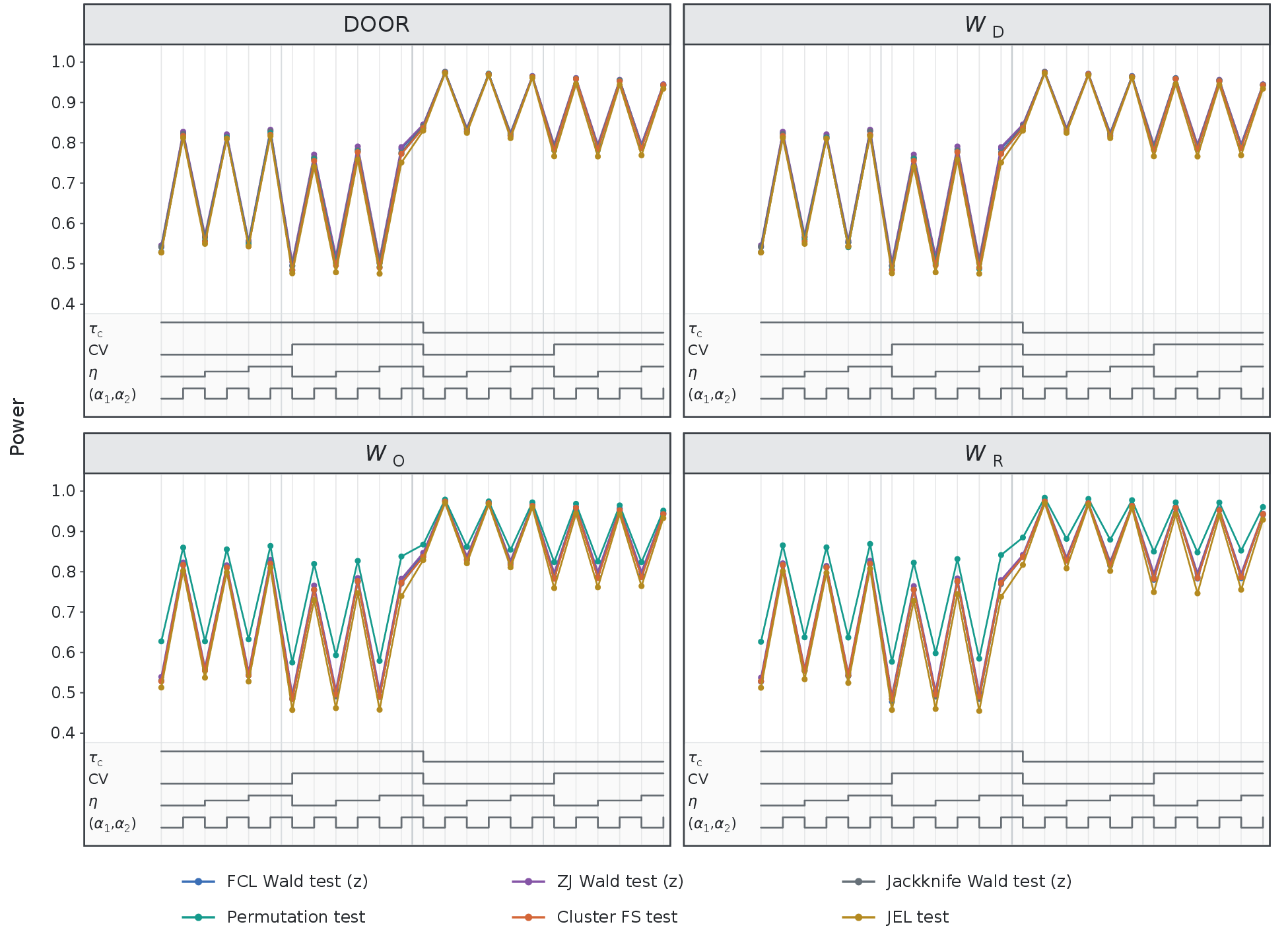}\\
    \caption{Empirical power for tests of win statistics in parallel-arm cluster-randomized trials with \(M=100\) under the concordant beneficial alternative \((\theta_1,\theta_2)=\{\log(0.65),\log(0.50)\}\) across 2{,}000 Monte Carlo replicates. The procedures include the Wald t test with df\(=18\) proposed using FCL \citep{fang2025sample}; the Wald t test with df\(=18\) using ZJ \citep{zhang2021inference}; a Wald t test with delete-one-cluster jackknife standard errors; exact permutation test; cluster FS test \citep{finkelstein1999combining}; and the jackknife empirical likelihood (JEL) test. Power is computed using a two-sided rejection criterion at the nominal level \(\alpha=0.05\). The annotation strip above the plot encodes the scenario factors by: \(\pi_{\text{tie}}\) (41\% versus 8\%), \(\mathrm{CV}(N_i)\) (0.3 versus 0.5), within-individual dependence parameter \(\eta\) (1, 2, versus 4), and frailty shape parameters \((\alpha_1,\alpha_2)\) (\((1,1)\) versus \((2,2)\)).}
    \label{fig:sim_power_100_t}
\end{sidewaysfigure}

\section{Illustrative data example with t-test}

Table \ref{supp:stride} reports the STRIDE data analysis results under the same inferential procedures as in Table 4 in the main text, with the Wald-type procedures referenced to a $t$ distribution with $84$ degrees of freedom.

\begin{table}[!ht] 
\centering
\caption{STRIDE illustration results for the four win statistics under the three Wald-type inferential procedures. Reported quantities include the point estimate (Est.), standard error (SE), and the two-sided \(p\)-value based on the \(t\)-reference distribution with \(df\)=84.}
\label{supp:stride}
\setlength{\tabcolsep}{4pt}
\renewcommand{\arraystretch}{1.12}

\resizebox{\textwidth}{!}{%
\begin{tabular}{
  l
  S[table-format=1.3] S[table-format=1.3] S[table-format=1.3]
  S[table-format=1.3] S[table-format=1.3] S[table-format=1.3]
  S[table-format=1.3] S[table-format=1.3] S[table-format=1.3]
  S[table-format=1.3] S[table-format=1.3] S[table-format=1.3]
}
\toprule
& \multicolumn{3}{c}{$W_D$}
& \multicolumn{3}{c}{$W_R$}
& \multicolumn{3}{c}{$W_O$}
& \multicolumn{3}{c}{DOOR} \\
\cmidrule(lr){2-4}\cmidrule(lr){5-7}\cmidrule(lr){8-10}\cmidrule(lr){11-13}
Method
& {Est.} & {SE} & {$p$-value}
& {Est.} & {SE} & {$p$-value}
& {Est.} & {SE} & {$p$-value}
& {Est.} & {SE} & {$p$-value} \\
\midrule

Wald test (clustered rank sum)
& 0.040 & 0.014 & 0.005
& 1.134 & 0.050 & 0.008
& 1.083 & 0.030 & 0.007
& 0.520 & 0.007 & 0.005 \\

Wald test (bivariate clustered U-statistics)
& 0.040 & 0.013 & 0.003
& 1.134 & 0.047 & 0.005
& 1.083 & 0.028 & 0.004
& 0.520 & 0.006 & 0.003 \\

Wald test (jackknife SE)
& 0.040 & 0.013 & 0.003
& 1.134 & 0.047 & 0.006
& 1.083 & 0.028 & 0.005
& 0.520 & 0.007 & 0.003 \\

\bottomrule
\end{tabular}%
}

\vspace{2mm}
\footnotesize
The three procedures are the Wald test based on the clustered rank-sum representation (FCL), the Wald test based on bivariate clustered U-statistics (ZJ), and the Wald test with delete-one-cluster jackknife standard errors. The reported \(p\)-values are based on the \(t\)-reference distribution.
\end{table}

\clearpage
\section{Tutorial for the \texttt{WinsCRT} package and example code} \label{sec:tutorial}

We developed \texttt{WinsCRT}, an \textsf{R} package for estimation and inference of win-statistics estimands in CRT using prioritized longitudinal event outcomes. The package is intended for CRT settings where each subject may contribute multiple time-stamped records, and event priority is encoded through an integer status variable. The package supports the win statistics 
\(
\text{WD} \ (\text{Win Difference}), 
\text{WR} \ (\text{Win Ratio}), 
\text{WO} \ (\text{Win Odds}),
\text{DOOR}.
\)
Here, DOOR is the probability-of-win estimand with ties split equally. Implemented inference methods are
\(\texttt{wald\_score},\ \texttt{wald\_u},\ \texttt{wald\_jk},\ \texttt{perm},\ \texttt{fs},\ \texttt{jel}.
\)
\texttt{WinsCRT} expects a long-format event-log data frame with one row per observed event/censoring record and at least the following columns:
\begin{itemize}
  \item \texttt{clu}: cluster identifier;
  \item \texttt{id}: subject identifier (within cluster);
  \item \texttt{trt}: cluster-level treatment indicator (\(0/1\));
  \item \texttt{t}: event/censoring time;
  \item \texttt{st}: status code.
\end{itemize}

Status coding convention:
\[
\texttt{st}=0 \text{ indicates censoring/no event}, \qquad
\texttt{st}\in\{1,2,\ldots\} \text{ indicates event types}.
\]
Larger positive status values correspond to higher priority (equivalently, smaller values correspond to lower priority). The main entry point is:
\begin{verbatim}
WinsCRT(data, cluster, subject, trt, time, status,
        method = c("wald_score","wald_u","wald_jk","perm","fs","jel"),
        estimand = c("WD","WR","WO","DOOR"),
        null = NULL, alternative = c("two.sided","greater","less"),
        alpha = 0.05, use_t = TRUE, B = 2000, seed = NULL,
        keep = NULL, strict = TRUE)
\end{verbatim}

The function returns an object of class \texttt{"WinsCRT"} containing:
\begin{itemize}
  \item point estimate,
  \item standard error (if available),
  \item test statistic,
  \item \(p\)-value,
  \item confidence interval (if available),
  \item method-specific details.
\end{itemize}
Associated \texttt{print()} and \texttt{summary()} methods are implemented. The package includes an example dataset \texttt{dat}. A minimal usage example is:

\begin{verbatim}
library(WinsCRT)

data(dat)

fit <- WinsCRT(
  data     = dat,
  cluster  = "clu",
  subject  = "id",
  trt      = "trt",
  time     = "t",
  status   = "st",
  method   = "wald_score",
  estimand = "WD"
)

print(fit)
summary(fit)
\end{verbatim}

To compare methods for the same estimand:

\begin{verbatim}
methods <- c("wald_score","wald_u","wald_jk","perm","fs","jel")

res <- lapply(methods, function(m) {
  z <- WinsCRT(
    data     = dat,
    cluster  = "clu",
    subject  = "id",
    trt      = "trt",
    time     = "t",
    status   = "st",
    method   = m,
    estimand = "WD",
    B        = 1000,
    seed     = 123
  )
  data.frame(method = m, estimate = z$estimate, p_value = z$p_value)
})

do.call(rbind, res)
\end{verbatim}

For ratio estimands (\texttt{WR}, \texttt{WO}), inference is performed on the log scale where implemented, while point estimates and confidence limits are reported on the original scale.

\end{document}